\newtheorem{theorem}{Theorem}
\newtheorem{lemma}[theorem]{Lemma}
\newtheorem{proposition}[theorem]{Proposition}
\newtheorem{example}[theorem]{Example}
\newcommand{\hide}[1]{}
\newcommand{\N}{\ensuremath{\mathbb{N}}}
  \author{Fran\c{c}ois Cl\`ement, Carola Doerr, Lu\'is Paquete}
  \date{$^1$Sorbonne Universit\'e, CNRS, LIP6, Paris, France\\
  $^2$University of Coimbra, CISUC, Department of Informatics Engineering, Portugal}
\begin{document}
\title{Star Discrepancy Subset Selection: Problem Formulation and Efficient Approaches for Low Dimensions} 
\maketitle


\begin{abstract}
Motivated by applications in instance selection, we 
introduce the \emph{star discrepancy subset selection problem}, which consists of finding a subset of \(m\) out of \(n\) points that minimizes the
star discrepancy. 
First, we show that this problem is NP-hard.
Then, we introduce a mixed integer linear formulation (MILP) and
a combinatorial  branch-and-bound (BB) algorithm for the star discrepancy subset selection problem and we evaluate
both approaches against random subset selection and a greedy construction on
different use-cases in dimension two and three. Our results show that the MILP and BB are efficient in dimension two for large and
small $m/n$ ratio, respectively, and for not too large $n$. However,
the performance of both approaches decays strongly for larger dimensions and set sizes. 

As a side effect of our empirical comparisons we obtain point sets of  
discrepancy values that are much smaller than those of common low-discrepancy sequences, random point sets, and of
Latin Hypercube Sampling. This suggests that subset selection 
could be an interesting approach for generating point sets of small discrepancy value.
\end{abstract}

\sloppy
\section{Introduction}
\label{sec:intro}

Discrepancy measures are metrics designed to quantify how regularly a set of points is distributed in a given space. Several discrepancy notions exist, measuring different aspects of ``regularity''. The arguably most common discrepancy notion is the $L_{\infty}$ star discrepancy. Intuitively speaking, the $L_{\infty}$ star discrepancy of a point set $P \subseteq [0,1]^d$ measures how well the volume of a $d$-dimensional anchored box of the form $[0,q)$ can be approximated by the fraction $|P \cap [0,q)|/|P|$ of points that fall inside this box. More precisely, it measures the largest such deviation between volume and fraction of points. 
Point sets of low $L_{\infty}$ star discrepancy have several important applications, among them Quasi-Monte Carlo integration~\cite{Nie92,DickP10}, one-shot optimization~\cite{CauwetCDLRRTTU20,BousquetGKTV17}, financial mathematics~\cite{GalantiJ97OPtionPricing}, design of experiments~\cite{santner_design_2003}, and many more.  

The design of point sets that guarantee small discrepancy values has been an intensively studied topic in numerical analysis in the last decades, and several constructions are known to achieve a smaller $L_{\infty}$ star discrepancy than randomly sampled points. Among the best-known low-discrepancy constructions are those by Hammersley~\cite{Ham60}, by Sobol'~\cite{Sobol}, and by Halton~\cite{Halton64}. For $d=2$, the construction by Faure~\cite{Faure82} as well as the Fibonacci sequence are often recommended~\cite{Nie92}. 
What is common to all these constructions is that the driving motivation behind their design  are small discrepancy values \emph{in the asymptotic sense,} i.e., when $n=|P| \rightarrow \infty$. While in this setting 
an advantage over random sampling is indeed significant -- the expected $L_{\infty}$ star discrepancy value of i.i.d.~uniformly sampled points is of order $\sqrt{d/n}$~\cite{Doerr14lowerBoundRandomPoints,HNWW01}, whereas the discrepancy of the mentioned low-discrepancy sequences scales as $\ln^{d-1}(n)/n$ -- 
we often require large sample sizes $n$ in order to achieve asymptotic advantage. Low-discrepancy sequences, and in particular Sobol' sequences~\cite{santner_design_2003} are nevertheless often used in practical applications, often competing with uniformly sampled points and Latin Hypercube Sampling (LHS~\cite{LHS}). 

Given the advantageous behavior of point sets of small discrepancy in practice, we study in this work how to choose from a given set $P$ of $n$ points a subset $P_m$ of size $m \le n$ such that the $L_{\infty}$ star discrepancy of $P_m$ is minimized. This \emph{star discrepancy subset selection problem} has its origins in Machine Learning (ML) and in optimization, and in particular in the instance selection problem, where one aims to select from a given set of instances a small subset that maximizes diversity -- with the idea that more diverse instances provide better training opportunities for ML-based approaches. An example for such an approach can be found in~\cite{NeumannGDN018}, where diverse images and instances of the traveling salesperson problem (TSP) are constructed via an evolutionary approach. In each iteration, the evolutionary algorithm generates a set of new instances and a selection operator then updates by selecting instances from the old and the newly generated ones. Since no efficient algorithms were known in~\cite{NeumannGDN018} to address the general star discrepancy subset selection problem, only so-called ``+1'' schemes are considered, which generate only one new instance per iteration. 

Previous results on the NP-hardness of calculating the star discrepancy~\cite{complexity} hint to the difficulty of solving this problem exactly. We show by a reduction from the DOMINATING-SET problem that the decision version of the star discrepancy subset selection problem is NP-hard. We then study the efficiency of algorithmic approaches for the star discrepancy subset selection problem. Simple algorithmic approaches such as random subset selection and iterative greedy selection do not perform well, motivating the design and the analysis of a mixed-integer linear formulation as well as a combinatorial branch-and-bound approach for this problem. 
The mixed-integer linear formulation (MILP) is a natural formulation of the discrepancy subset selection problem that uses a particular property of this problem. In a nutshell, it uses the well-known fact that the worst mismatch between the volume of an anchored box and the fraction of points that fall inside this box is obtained in one of the points that lie on the grid that is spanned by the points in the set. Our branch and bound
(BB) is a classical approach that starts from a greedy solution and uses
combinatorial lower bounds for pruning, which can be computed in an incremental
manner. Our experimental
results for $d=2$ indicate that BB presents better performance for small $m/n$ ratios while MILP performs better 
for large $m/n$ ratios. We relate these findings to the quality of the lower bounds of MILP. Unfortunately, the performance deteriorates strongly already for $n > 140$ and for $d>3$, so that we have to restrict our analysis to the two- and three-dimensional cases.

As a side result, we observe that subset selection can be an interesting approach to generate point sets of small discrepancy values. For our two-dimensional test cases, the star discrepancy of the best found size-$m$ subsets of the Sobol', the Faure, the Halton, and the reverse Halton~\cite{RevHal} sequence is around 50\% smaller than the star discrepancy of the original construction of the same size for $m=20$ and $40$. For larger $m$, the advantage is slightly smaller, but still 40\%, on average, for $m=60$, 36\% for $m=80$, and 44\% for $m=100$. Similar advantages are obtained 
in the three-dimensional case for these four sequences. Much better advantages of at least 60\% are obtained for uniform samples 
for $d=2$ and $d=3$
 and for Latin Hypercubes for $d=3$. For the 
 Fibonacci sequence for $d=2$, in contrast, the advantages are much less important, it is less than 1\% for $m=80$ and $m=100$, but it is slightly above $27\%$ and $22\%$ for $m=20$ and $m=40$, respectively. 

\textbf{Related Work:} The problem of selecting subsets with respect to small  discrepancy values was also the focus of the work on so-called \emph{online thinning},  presented in~\cite{Thinning1}. Online thinning requires a decision maker to either accept or reject a point of a sequence into a selected subset, with the goal of minimizing the discrepancy of the selected set. The process studied in~\cite{Thinning1} assumes, in addition, that at least one out of every two consecutive points has to be selected. The three main differences between their work and ours are:  
(1) while \emph{sequences} are studied in~\cite{Thinning1}, we consider fixed point \emph{sets} $P$ and a fixed target size $m$,  
(2) we optimize over all possible subsets of a given size $m$, and 
(3) in contrast to~\cite{Thinning1}, our approaches are not restricted to uniformly sampled i.i.d.~points.

\textbf{Availability of Code and Results:} The point sets with the best star discrepancy for each value of $m$ in the two-dimensional case that were obtained in our experiments are available at~\url{https://algo.dei.uc.pt/star}. 
This repository is available to the community for reporting improving results and it will be continuously updated for
different values of $m$ and $d$. 
The BB code for $d=2$ is available at \url{https://github.com/luis-paquete/StarDSS}.

\textbf{Structure of the Paper.} We recall in Section~\ref{sec:prelims} relevant definitions and properties of the $L_{\infty}$ star discrepancy problem. In particular, we summarize known bounds, address computational aspects of evaluating the $L_{\infty}$ star discrepancy of a given point set, and briefly introduce the point sets that we consider in the experimental analysis. In Section~\ref{sec:subset} we introduce the discrepancy subset selection problem, prove NP-hardness of its associated decision problem, and discuss some basic properties that are explored by MILP. Our subset selection algorithms are presented in Section~\ref{sec:algos}, while a comparison in terms of running times and solution quality is provided in Section~\ref{sec:comp-algos}. The discrepancy values of the optimal subsets will be discussed in Section~\ref{sec:comp-discrepancy}. We conclude our paper in Section~\ref{sec:conclusions} with a summary of promising directions for future work.

\section{Discrepancy Theory}
\label{sec:prelims}

We briefly summarize in this section relevant background on discrepancy theory. Readers already familiar with this subject can skip this section without loss. Concretely, we first define the discrepancy measure of interest, the $L_{\infty}$ star discrepancy, and summarize known bounds for this measure (Section~\ref{sec:discrepancy-def}). 
Some of the best-known low-discrepancy sequences as well as two random point constructions that will be relevant for our experimental analysis will be presented in Section~\ref{sec:sequence}. In Section~\ref{sec:computation}, we briefly recall a basic property of the star discrepancy evaluation problem, which reduces it to a discrete optimization problem.  

\subsection{The $L_{\infty}$ Star Discrepancy and Known Bounds}
\label{sec:discrepancy-def}

The $L_\infty$ \emph{star discrepancy} $d_{\infty}^*(P)$ of a
point set $P \subseteq [0,1]^d$ 
is defined  as 
\begin{equation}
	d_{\infty}^*(P) := \sup_{q \in [0,1]^d} \left \lvert \frac{D(q,P)}{|P|}   
		- \lambda(q) \right \rvert, 
\label{eq:1}
\end{equation}
where $\lambda(q)$ is the Lebesgue volume of the $d$-dimensional box 
$[0,q)$ and $D(q,P)=|P \cap [0,q)|$ denotes the number of points in $P$ that fall inside this box. 
Thus, intuitively, the $L_\infty$ star discrepancy measures how well the volume of anchored boxes of type $[0,q)$ can be approximated by the fraction of points falling inside this box. 

Other discrepancy notions exist, e.g., differing in the collection $\mathcal{C}$ of subsets $S\subseteq [0,1]^d$ for which the volume shall be approximated (the term ``star'' in the $L_\infty$ star discrepancy indicates $\mathcal{C} = \{ [0,q) \mid q \in [0,1]^d\}$), or in the norm for which the deviation is measured (Definition~\eqref{eq:1} uses the $L_{\infty}$ norm, since we consider the supremum; averaging with respect to the $p$-norm yields another very common class of discrepancy measures, for which in particular the case $p=2$ is well studied). 
Among all discrepancy notions, the $L_\infty$ star discrepancy is the one that has received most attention in the research literature, most notably because of its tight connection to Monte Carlo integration via the Koksma-Hlawka inequality, which states that the absolute additive error of approximating an integral $\int_{[0,1)^d} f(x) \,d\lambda^d(x)$ by the simple average $\frac{1}{|P|} \sum_{p\in P} f(p)$ is bounded from above by $\text{Var}(f) d_{\infty}^*(P)$, where $\text{Var}(f)$ denotes the variation in the sense of Hardy and Krause (see, e.g., \cite{Nie92} for more detailed definitions). In most applications, we cannot control the function $f$ whose integral we aim to approximate, but we typically \emph{do} have control about the points in which we evaluate it. Designing point sets $P$ that minimize $d_{\infty}^*(P)$ is therefore a problem that has been very actively 
investigated
in the last decades.

\paragraph{Known Bounds for the Star Discrepancy} 
Despite significant research efforts spent on analyzing low-discrepancy constructions,  there is still an important gap between the best  known lower and upper bounds. More precisely, it is known that for all $d,n \in \N$ and all sets $P \subseteq [0,1)^d$ of cardinality $|P|=n$, the $L_{\infty}$ star discrepancy of $P$ satisfies 
$d^*_{\infty}(P) \ge \min\{c_0, cd/n \}$, 
where $c_0,c \in (0,1]$ are suitable constants~\cite{Hin04}. On the positive side, there exist $n$-point sets $P$ such that 
$d^*_{\infty}(P) \le  C \sqrt{d/n}$, for some universal constant $C>0$~\cite{HNWW01,GnewuchH20LHSlupper}. 
Uniformly sampled i.i.d.~points satisfy the upper bound in expectation and also with high probability~\cite{HNWW01,Doerr14lowerBoundRandomPoints}. 

In the literature, we often find the following bounds, which focus on the asymptotic dependency on $n$:  
for $d=1$ it holds that every point set $P$ satisfies $d^*_{\infty}(P) \ge 1/(2n)$, and for $d=2$ all $n$-point sets $P$ satisfy 
$d^*_{\infty}(P) \ge 0.023... \log(n)/n$~\cite{SchmidtLowerD2}. It is conjectured that these lower bounds extend to larger dimensions in that for every dimension $d$ there exists a constant $c_d>0$ such that any $n$-point set $P\subseteq [0,1]^d$ satisfies  
$d^*_{\infty}(P) \ge c_d \ln^{d-1}(n)/n$. This conjecture, however, is still open for $d \ge 3$. 
In this case, the best-known lower bound states that for each $d$ there exists a constant $c=c(d)$ such that 
$d^*_{\infty}(P) \ge c \log^{c+(d-1)/2}(n)/n$~\cite{BilykLV08}. 

As can be expected, the point sets that are known to satisfy the best-known upper bounds are  specifically 
tuned to the budget $n$ of points. The discrepancy of a point set $P_n$ which originates from taking the first $n$ elements of a sequence $(P_m)_{m \in \N}$ is necessarily larger, and the above-mentioned conjecture translates to
$d^*_{\infty}(P_n) \ge c_d \ln^{d}(n)/n$, i.e., the best achievable discrepancy is believed to increase by a $\log(n)$ factor. 
Sequences $(P_n)_{n \in \N}$ satisfying 
$d^*_{\infty}(P_n) \le C_d \ln^{d}(n)/n$ for some constant $C_d>0$ are called \emph{low-discrepancy sequences.} 
We will see examples of low-discrepancy sequences in Section~\ref{sec:sequence}. Note, though, that the convergence towards the desired $\ln^{d}(n)/n$ behavior may require very large $n$. It is therefore not clear, a priori, if low-discrepancy sequences are indeed advantageous over random sampling or over other constructions when the sample size $n$ is small. In this case, the first-mentioned type of bounds are more meaningful, but no constructions are known that have a provable advantage over random sampling in all settings $(d,n)$.  
This problem, highly relevant for practical purposes, is, unfortunately, still wide open. 
Even in the two-dimensional case, strictly optimal point sets (i.e., point sets of a given size $n$ which minimize the $L_{\infty}$ star discrepancy) are known only for very small $n \le 6$~\cite{White77}.\footnote{It was pointed out to us by Aicke Hinrichs that some of the constructions derived in this paper are incorrect. The results should thus be handled with care.} 
Finding low-discrepancy point sets for concrete combinations of $d$ and $n$, but without any attempt to find constructions that generalize to other sample sizes or dimensions, has been the focus in~\cite{evolutionary_extension}. 


\textbf{Convention:} Since in this work we will exclusively focus on $L_\infty$ star discrepancies, we shall often drop the explicit mention of the $L_{\infty}$ norm and the explicit mention of the ``star'' property. That is, unless specified otherwise, all occurrences of ``discrepancy'' are to be read as ``$L_{\infty}$ star discrepancy''. 

\subsection{Low-Discrepancy Sequences}
\label{sec:sequence}

We consider five deterministic low-discrepancy sequences 
and two random constructions in our experiments. 
We briefly describe these point sets in the following paragraphs. 

\paragraph{Low-Discrepancy Sequences} 
From the rich set of low-discrepancy sequences (see~\cite{DickP10,Nie92,Mat99} for pointers) we selected the following five, giving preference to constructions which are generally believed to show good behavior in small dimensions. We consider the deterministic variants of each sequence only. Random perturbations (``scrambling'') could yield smaller discrepancy values. However, while the study of such perturbed versions could be interesting in the context of collecting point sets of small discrepancy values, the random nature complicates the interpretation of the results for the subset selection problem, as we shall also see with the two random constructions which we discuss in the next paragraph. 
\begin{itemize}
\item [-] \textbf{Sobol' sequences} ({\tt Sobol}~\cite{Sobol}), also called $(t,d)$-sequences in base 2: For two integers $0<t\le m$, a \emph{$(t,m,d)$-net in base $b$} is a set of points $P=\{p_1,\ldots,p_{b^m}$\} such that for all ``elementary'' boxes $I$ of the form 
$\prod_{j=1}^d [\frac{a_j}{b^{d_j}},\frac{a_j+1}{b^{d_j}})$, 
with $a_j, b \in \N$ satisfying $0<a_j<b^{d_j}$, and volume $\lambda(I)=b^{t-m}$
it holds that $|I \cap P|=b^t$. 
For $t \in \N$, a \emph{$(t,d)$-sequence in base $b$} is a sequence of points $(p_i)_{i \in \N}$ such that for all integers $k>0$ and $m>t$ the set $\{p_{kb^m},\ldots, p_{(k+1)b^m-1}\}$ is a $(t,m,s)$-net in base $b$. 
Various ways to construct Sobol' sequences exist. The most efficient techniques use Gray code representations of integers. Sobol' sequences differ in the initialization numbers, and several works exist, which list good initialization for different dimensions, see~\cite{JK08SobolGneeration} for examples, references, and implementations.
\item [-]
\textbf{Faure sequence} ({\tt Faure}~\cite{Faure82}) is a $(0,d)$-sequence using as prime base the smallest prime number $b$ satisfying $b \ge d$. 
\item [-]
\textbf{Halton sequence} ({\tt Halton}~\cite{Halton64}): Let $b_1,\ldots,b_d > 1$ be co-prime numbers. Define the sequence $P=(p_i)_{i \in \N}$ by setting, for each $j \in [1..d]$, 
$p_i^j:= \sum_{k \ge 0}{d_{j,k}(i)/b_j^{k+1}}$, 
where $(d_{j,k}(i))_{k \in \N}$ is defined as the unique sequence of integers $0 \le d_{j,k}(i) < b_j$ such that 
$i=\sum_{k\ge 0} d_{j,k}(i)b_j^k$. That is, $(d_{j,k}(i))_{k \in \N}$ is the $b_j$-ary representation (also known as \emph{$b_j$-adic expansion}) of $i$, and the Halton points ``inverses'' this representation to obtain numbers in $[0,1]$. 
\item [-] \textbf{Reverse Halton sequence} ({\tt RevHal}): It is known that Halton sequences show some unwanted correlations in the two-dimensional projections (unless the dimension $d$ is very small), see~\cite{DoerrGW14} for an example. To address this shortcoming, different scrambled versions have been suggested. In our experiments we use the {\tt RevHal} constructions suggested in~\cite{RevHal}.
\item [-] \textbf{Fibonacci points} ({\tt Fibon}): This sequence is defined only for the two-dimensional case. 
For a given sample size $n$, the points are defined as $p_i:=\left((\{i/\varphi\}, i/n) \right)_{i \in [1..n]}$, where 
$\{r\}:=r-\lfloor r \rfloor$ denotes the fractional part of the real number $r$ and $\varphi:=(1+\sqrt{5})/2 \approx  1.618$ denotes the golden ratio. 
The Fibonacci sequence is known to satisfy $d^*_{\infty}(P^n) = O(\log b_{n})=O(\log n)$~\cite{Nie92}. Its discrepancy values are hence asymptotically optimal by the already mentioned lower bound for $d=2$ proven in~\cite{SchmidtLowerD2}. 
\end{itemize}

\paragraph{Random Constructions} 
In addition to the five low-discrepancy sequences, we have also considered two randomized constructions, uniform sampling and Latin Hypercubes. While the former can be seen as a sequence, the latter requires to fix the number of points $n$ in advance, so that it is not referred to as a sequence, but a point set. 
\begin{itemize}
\item [-] \textbf{Uniform sampling} ({\tt unif}): We simply select $p_i \in [0,1]^d$ uniformly at random, and do this independently for each $i$.  

\item [-] \textbf{Improved Latin Hypercube Sampling} ({\tt iLHS}): Classical Latin Hypercube sampling requires to sample $d$ permutations $\sigma^1, \ldots, \sigma^d$ of the set $[1..n]$   
and to set $p^{j}_i:=(\sigma^j(i)-u^{j}_i)/n$, where  $0\le u^{j}_i <1$ denotes a uniformly sampled value. That is, we select the $i$-th point $p_i$ by choosing it randomly in the box $[(\sigma^j(i)-1)/n, \sigma^j(i)/n]^d$. The advantage of LHS over uniformly selected (Monte Carlo) points is that the one-dimensional projections are all well spread. A disadvantage is that the points can nevertheless be close to each other, e.g., when $\sigma^j$ is the identity permutation for all $j \in [1..d]$ (in which case the $n$ points are all close to the diagonal). Various versions of LHS have been suggested in the literature. In our comparison we use the ``improved'' LHS construction suggested in~\cite{iLHS}. This variant constructs the  set $P$ iteratively, by sampling at each stage a few alternatives and then selecting the candidate that maximizes the distance to the points that are already collected in the set $P$. 
\end{itemize}

In terms of discrepancy values the two randomized constructions behave quite similarly: as already commented in Section~\ref{sec:discrepancy-def}, the expected star discrepancy of a set $P$ of $n \ge d$ i.i.d.~uniformly selected points is bounded from below by
$
{\bf {\rm E}} [d^*_{\infty}(P)] \ge c \sqrt{d/n}
$, for some universal constant $c>0$~\cite{Doerr14lowerBoundRandomPoints}. It is furthermore unlikely that the star discrepancy of $P$ is much smaller, as the concentration bound 
$
{\bf {\rm P}}\! \left(d^*_{\infty}(P)< c \sqrt{d/n} \right) \le \exp(-\Omega(d)),
$ also proven in~\cite{Doerr14lowerBoundRandomPoints}, shows. 
This bound is tight, in the sense that there exists a constant $C$ such that the star discrepancy of a uniform i.i.d.~point set $P$ satisfies 
$d^*_{\infty}(P)] \le C \sqrt{d/n}$ with high probability, see~\cite{randomUpperbounds} for an explicit proof and further references.  
The same bounds also apply to LHS with randomly placed points in the selected boxes, provided that $d \ge 2$ and $n \ge 1\,600d$~\cite{DoerrDG18LHSbounds,GnewuchH20LHSlupper}. Note that in our experiments, we deal with much smaller sample sizes and we use the ``improved LHS'' suggested in~\cite{iLHS}, for which the results do not immediately apply. 

\subsection{Computation of Star Discrepancy Values}
\label{sec:computation}

We finally summarize a few computational aspects of star discrepancy computation.  
A more exhaustive survey on this topic can be found in the book chapter~\cite{DoerrGW14}. 
The decision version of calculating the star discrepancy of a point set is
an NP-hard problem~\cite{complexity} as well as W[1]-hard~\cite{w1_discrepancy}.
The most efficient algorithm for this problem was proposed in \cite{discrepancy_algorithm},
with a running time of \(O(n^{d/2 + 1})\). For $d \in \{2,3\}$ the algorithm proposed in~\cite{BZ93} is efficient but its running time scales as $n^d/d!$ when extending the algorithm to higher dimensions.

At the heart of all algorithms designed to evaluate the star discrepancy of a given point set $P=\{p_1,\ldots,p_n\} \subseteq [0,1]^d$ is the following observation, which reduces the maximization problem~\eqref{eq:1} to a discrete problem. 
Our subset selection algorithms will make use of these observations, and, therefore, we briefly summarize this reduction. 

For any point $q \in
[0,1]^d$ let $D(q,P)$ be the number of points that fall inside the open box $[0,q)$ and let $\overline{D}(q,P)$ be the number of points inside the closed box $[0,q]$, respectively. That is,  
\begin{equation}
	D(q,P) := \sum_{i = 1}^{n} \mathbf 1_{[0, q)} (p_{i})   
\qquad \text{ and } 
\qquad 
\overline{D}(q,P) := \sum_{i = 1}^{n} \mathbf 1_{[0, q]} (p_{i}), 
\end{equation}
\noindent where $\mathbf 1$ denotes the indicator function, i.e., $\mathbf 1_{[0, q)} (p)=1$ if $p\in [0,q)$ and $\mathbf 1_{[0, q)} (p)=0$ otherwise. 
The \emph{local discrepancy} $d_{\infty}^{*}(q,P)$ of a point $q \in [0,1]^d$ is defined as the maximum of the following two values:
\begin{equation}
	\delta(q, P) := \lambda(q) - \frac{1}{n} D(q, P)
\qquad \text{ and } 
\qquad 
	\overline{\delta}(q, P) := \frac{1}{n} \overline{D}(q, P) - \lambda(q).
	\label{eq:delta}
\end{equation}
Instead of evaluating $\delta(q, P)$ and $\overline{\delta}(q, P)$ for all $q \in [0,1]^d$, it suffices to consider the points on the following grids: 
\begin{equation}
\Gamma(P) := \Gamma^{1}(P) \times ... \times \Gamma^{d}(P) 
\qquad \text{ and } 
\qquad 
\overline{\Gamma}(P) := \overline{\Gamma}^{1}(P) \times ... \times \overline{\Gamma}^{d}(P),
\end{equation}
\noindent where, for $j\in[1..d]$,
\begin{equation}
\Gamma^{j}(P):=\{p^{j}_{i} \ \mid \ i \in \{1, \ldots, n\}\}
\qquad \text{ and } 
\qquad 
\overline{\Gamma}^{j}(P) :=\Gamma^{j} \left(P \cup \{(1,\ldots, 1)\} \right).
\end{equation}
The reduction of Equation~\eqref{eq:1}  to a discrete maximization problem then states that 
\begin{equation}
d_{\infty}^{*}(P) := \max \left\{\max_{q \in \overline{\Gamma}(P)}\delta(q, P),
\max_{q \in \Gamma(P)}\overline{\delta}(q, P)\right\}.
\label{discrepancy_formula}
\end{equation}
A simple proof for this equation can be found in~\cite{complexity}, but, as mentioned there,  
Equation~\eqref{discrepancy_formula} had previously been mentioned in several works, with~\cite{Nie72a} being one of the earliest examples.

\section{The Star Discrepancy Subset Selection Problem}
\label{sec:subset}

We are now ready to define the \emph{star discrepancy subset selection problem}. 
Given a $d$-dimensional point set $P \subseteq [0,1]^d$ of size $\lvert P \rvert = n$,
and given an integer \(m \le n\), the goal is to find a subset \(P^{*} \subseteq P\)
of size \(\lvert P^*\rvert = m\) such that \(d_{\infty}^{*}(P^{*})\) is
minimized. 
Using Equation~\eqref{discrepancy_formula}, the $L_{\infty}$ star discrepancy subset selection problem has the following equivalent formulation: 
\begin{equation}
	\label{eq:10}
\min_{\substack{P^* \subseteq P\\|P^*| = m}} \max \left \{\max_{q \in \overline{\Gamma}(P^{*})}\delta(q, P^*), \max_{q \in \Gamma(P^{*})}\overline{\delta}(q, P^*) \right\}.
\end{equation}
We show NP-hardness of this problem in Section~\ref{sec:NP} and we discuss some basic properties in Section~\ref{sec:Basic prop}.

\subsection{NP-Hardness of the Subset Selection Problem}
\label{sec:NP}
We consider the following decision version of the $L_{\infty}$ star discrepancy subset selection problem:

\fbox{\parbox{\textwidth}{%
\textbf{Decision problem:} Discrepancy Subset Selection \\
\textit{Instance}: natural numbers $n, m \in \mathbb{N}$, $m \leq n$, $\varepsilon \in \left(0,1 \right]$, point set $P=(p^i)_{i \in [1..n] }$ \\
\textit{Question}: Is there $P' \subseteq P$ such that $|P'|=m$ and $d_{\infty}^{*}(P') \leq \varepsilon$?
}}

\begin{theorem}\label{thm:NP}
The decision version of the discrepancy subset selection problem is NP-hard.
\end{theorem}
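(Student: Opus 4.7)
The plan is a polynomial-time many-one reduction from the NP-complete \textsc{Dominating-Set} problem: given a graph $G=(V,E)$ with $V=\{v_1,\ldots,v_n\}$ and an integer $k$, decide whether there exists $D\subseteq V$ with $|D|\le k$ whose closed neighborhood $\bigcup_{v\in D}N[v]$ equals $V$. Given such an instance $(G,k)$, I would construct in polynomial time a point set $P\subseteq[0,1]^d$, a subset size $m$, and a threshold $\varepsilon$ such that $(G,k)$ is a yes-instance if and only if there exists $P'\subseteq P$ with $|P'|=m$ and $d^*_\infty(P')\le\varepsilon$.

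\textbf{Construction and the ``only if'' direction.} Set $d:=n$ and, for each vertex $v_i$, define $p_i\in[0,1]^n$ by $p_i^j:=\alpha$ if $v_j\in N[v_i]$ and $p_i^j:=\beta$ otherwise, with two constants $0<\alpha<\beta<1$ to be tuned. Put $m:=k$, so each candidate subset of size $m$ is $P_S:=\{p_i:i\in S\}$ for some $k$-subset $S\subseteq\{1,\ldots,n\}$. For every $j\in\{1,\ldots,n\}$, define the ``test box'' $T_j:=[0,q^{(j)})$ with $q^{(j)}_j:=\beta$ and $q^{(j)}_\ell:=1$ for $\ell\ne j$; its volume is $\beta$. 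Because $p_i\in T_j$ iff $p_i^j<\beta$ iff $v_j\in N[v_i]$, one has $|P_S\cap T_j|=|S\cap N[v_j]|$, so the local discrepancy $\delta(q^{(j)},P_S)$ equals $\beta$ precisely when $v_j$ is undominated by $\{v_i:i\in S\}$, while it drops by at least $1/k$ as soon as $v_j$ is dominated. Choosing $\varepsilon$ strictly between $\beta-1/k$ and $\beta$ therefore makes the test boxes certify that every non-dominating subset has discrepancy above $\varepsilon$, which already yields the ``only if'' direction.

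\textbf{Reverse direction and main obstacle.} The harder direction requires showing that when $S$ is a dominating set, no grid box pushes $d^*_\infty(P_S)$ above $\varepsilon$. By Equation~\eqref{discrepancy_formula}, it suffices to classify the boxes in $\overline{\Gamma}(P_S)$ and $\Gamma(P_S)$ by how many of their coordinates are pinned to $\alpha$, $\beta$, or (in $\overline{\Gamma}$) to $1$. Boxes with any $\alpha$-coordinate are either empty in the open case (contributing at most $\alpha$ to $\delta$) or relate, in the closed case, to intersections of closed neighborhoods $\bigcap_{\ell\in A}N[v_\ell]$; for dominating $S$ the resulting $\overline{\delta}$ values can be controlled through the $1/k$ gap that the dominator provides. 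The delicate part is the family of multi-coordinate $\beta$-boxes, whose local $\delta$ can be as large as $\beta^t$ when $t$ coordinates are pinned to $\beta$ and the corresponding neighborhood intersection misses $S$, together with the closed all-$\beta$ box whose $\overline{\delta}$ equals $1-\beta^n$. The two requirements ``$\beta$ not too close to $1$, so $\beta^t$ stays below $\varepsilon$ for $t\ge 2$'' and ``$\beta^n$ close to $1$, so $1-\beta^n$ stays below $\varepsilon$'' pull in opposite directions, and resolving this tension is the crux of the argument. I expect this to be fixable by augmenting $P$ with a polynomial number of auxiliary ``anchor'' points that homogenize the grid and force every multi-$\beta$ box to contain at least a prescribed fraction of selected points; a simultaneous tuning of $\alpha$, $\beta$, the number of anchors, and $\varepsilon$ then closes the argument. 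The analogous tension is resolved in the NP-hardness reduction for discrepancy computation in~\cite{complexity}, which I would mirror here.
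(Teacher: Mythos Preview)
Your reduction and the test-box argument for the ``only if'' direction are sound, but the reverse direction is left genuinely incomplete, and the proposed anchor-point patch is both unnecessary and unlikely to work cleanly (once extra points sit in $P$, nothing forces an optimal size-$m$ subset to select them). The tension you diagnose is in fact a mirage: you implicitly let $\varepsilon$ range over all of $(\beta-1/k,\beta)$, but nothing stops you from pinning $\varepsilon$ just below $\beta$, and then $\beta^t\le\beta^2<\varepsilon$ and $1-\beta^n<\varepsilon$ hold simultaneously once $\beta$ is close to~$1$. The constraint you do overlook is the family of closed boxes with $\alpha$-coordinates: if $S\subseteq N[v_j]$ then the box with $q^j=\alpha$ and $q^\ell=\beta$ elsewhere contains all of $P_S$, giving $\overline\delta=1-\alpha\beta^{n-1}$, which forces $\alpha$ close to $1$ too. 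Taking e.g.\ $\alpha=1-2/n^2$, $\beta=1-1/n^2$, and any $\varepsilon\in[\beta^2,\beta)$, one checks that for dominating $S$ every open grid box with an $\alpha$-coordinate gives $\delta\le\alpha$, every single-$\beta$ box gives $\delta\le\beta-1/m$, every multi-$\beta$ box gives $\delta\le\beta^2$, and every closed grid box gives $\overline\delta\le 1-\alpha^n=O(1/n)$; all four bounds lie below $\varepsilon$, so your construction goes through with no auxiliary points.

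The paper sidesteps this case analysis entirely by reversing the inequality to $1/n>\alpha>\beta>0$, clustering all points near the origin. Using the formula $d^*_\infty(P_T)=\max\bigl\{\max_k(V^k_{\max}-k/m),\ \max_k(k/m-V^k_{\min})\bigr\}$ from~\cite{complexity}, every term other than $1-V^m_{\min}$ is at most $\max\bigl(\alpha,\,(m-1)/m-\beta^n\bigr)<1-\alpha^n$, so the discrepancy of \emph{any} size-$m$ subset is exactly $1-V^m_{\min}$. The smallest closed box containing all of $P_T$ has $j$-th side equal to $\alpha$ precisely when some $i\in T$ dominates $v_j$, hence $V^m_{\min}=\alpha^n$ if $T$ is dominating and $V^m_{\min}\le\alpha^{n-1}\beta$ otherwise; setting $\varepsilon=1-\alpha^n$ finishes the reduction in one stroke. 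This is exactly the argument from~\cite{complexity} you planned to mirror, so following it directly would have spared you the detour.
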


Similarly as in the proof of the NP-hardness of calculating the $L_{\infty}$ star discrepancy presented in~\cite{complexity}, we will obtain Theorem~\ref{thm:NP} by a reduction from the DOMINATING-SET problem, a problem that is well known to be NP-complete (see, for example, \cite{Domi}).

\fbox{\parbox{\textwidth}{%
\textbf{Decision problem:} DOMINATING-SET \\
\textit{Instance}: Graph $G=(V,E)$, $m \in [1..|V|]$\\
\textit{Question} Is there a set $T \subseteq V$ of size at most $m$ such that for any $v \in V \setminus T$, there exists $t \in T$ such that $(v,t) \in E$?
}}

\begin{proof}[Proof of Theorem~\ref{thm:NP}]
Throughout this proof, let $q \in \left[ 0, 1 \right]^n$. We consider an instance $G=(V,E)$, $m \in [1..|V|]$ of DOMINATING-SET. We build a point set $P=(p_{i})_{i \in [1..n]}$ in $\mathbb{R}^{n}$ where $n=|V|$ by defining, for all $i,j \in [1..n]$, 
$$ p_i^j:=\begin{cases} \alpha ~~ & \text{if}~ (i,j) \in E ~~\text{or}~~i=j,\\
    \beta ~~ & \text{otherwise},\end{cases}  $$
    where $\alpha$ and $ \beta$ are two real values such that $\frac{1}{n}>\alpha>\beta>0$. 
    
To begin, we introduce the following formula for the $L_{\infty}$ star discrepancy, shown in \cite{complexity}. For any point set $P$, 
\begin{equation}\label{Vk}
    d_{\infty}^*(P)=\max \big\lbrace \max_{k=0,...,n-1} (V^k_{\max}- \tfrac{k}{n}), \max_{k=1,...,n} (\tfrac{k}{n}-V^k_{\min}) \big\rbrace,  
\end{equation}
where $V^k_{\min}$ is the volume of the smallest (by the Lebesgue measure) closed box containing at least $k$ elements of $P$, and $V^k_{\max}$ is the volume of the largest half-open box containing at most $k$ elements of $P$.

We will set aside the case $k=n$ in the first part of the proof. We consider $P_T$, a subset of $P$ of size $m$ associated to a subset $T \subseteq V$ of size $m$. For this point set $P_T$, the largest empty box $V^0_{\max}$ is of size at least $\beta^n$ and at most $\alpha^n$. Since the maximal coordinate for any point in $P_T$ is $\alpha$, any half-open box that does not contain all the points of $P_T$ will have at least one coordinate smaller than $\alpha$. This gives us $V^k_{\max} \leq \alpha$ for all $k \in [1..n-1]$. We obtain the upper-bound $\alpha$ for the first maximum in (\ref{Vk}) by choice of $\alpha$ and $\beta$.

For the second maximum, any closed box containing at least one point (but not all of them) will have each coordinate greater than or equal to $\beta$ since the lowest coordinate in any dimension for each point is $\beta$. This gives us a minimum volume of $\beta^n$ for a box containing some points of $P_T$. The fraction $\tfrac{k}{n}$ can be upper-bounded by $\tfrac{n-1}{n}$, which gives us an upper bound of $\tfrac{n-1}{n}- \beta^n \geq  \max_{k=1,...,n-1} (\tfrac{k}{n}-V^k_{\min})$.

We now consider the case $k=n$ depending on whether or not $T$ is a dominating set of $G$. If $T$ is a dominating set, by our point set construction, for any $j \in [1..n]$, there exists $i \in T$ such that $p_j^i=\alpha$. Any box of the type $[0,q]=\prod_{i=1}^n [0,q_i]$ will contain all the points of $P_T$ if and only if for all $i \in [1..n], q_i \geq \alpha$. Therefore, the smallest box containing all the elements of $P_T$ has volume $\alpha^n$. This gives us a $1-\alpha^n$ term in (\ref{Vk}), which is greater than all the other terms calculated previously, by choice of $\alpha$ and $\beta$.

If $T$ is not a dominating set, there exists a vertex $i$ not dominated by the elements of $T$. Since $i$ is not dominated by $T$, the smallest full-box has size at most $\alpha^{n-1} \beta$ since all the points in $P_T$ have $\beta$ as their $i$-th coordinate. This gives us at least a $1-\alpha^{n-1} \beta$ term in Equation (\ref{Vk}) which like in the previous case is also greater than all the other terms. It is also strictly greater than $1-\alpha^n$. In both cases, we have shown that the discrepancy value is obtained by the volume of the smallest full-box. We have that $d_{\infty}^*(P_T) \leq 1- \alpha^n$ if and only if $T$ is a dominating set of $G$. We note that any dominating set $T$ of size strictly smaller than $m$ can become a dominating set of size exactly $m$ by adding points from $G$ until $T$ is of size exactly $m$. This gives us the desired result: $G$ has a dominating set of size at most $m$ if and only if $P$ has a subset of size $m$ of discrepancy at most $1- \alpha^n$.
\end{proof}

We note that while the problem is NP-hard, it is not NP-complete to our knowledge. If we are given a subset of a point set $P$, checking if its discrepancy is smaller than $\epsilon$ cannot be done in polynomial time to our knowledge, under the hypothesis that $\mathtt{P} \neq \mathtt{NP}$. This comes from the fact that we want an upper bound on the discrepancy and not a lower bound. For a lower bound, given a specific anchored box, we can verify that the discrepancy is large enough in linear time by counting the points in the box and calculating its volume (Star discrepancy was shown to be NP-complete in \cite{complexity}). On the other hand, for an upper bound, we would need to check that all the possible anchored boxes have a small enough local discrepancy. It is not sufficient to exhibit one of them.

\subsection{Other Basic Properties of the Discrepancy Subset Selection Problem}
\label{sec:Basic prop}

\paragraph{Non-Monotonic Behavior of the Star Discrepancy} 
Before we discuss our strategies to solve the star discrepancy subset selection problem, we first note that the star
discrepancy is a non-monotone function, in the sense that $P' \subseteq P$ does not imply any order of $d_{\infty}^{*}(P')$ and $d_{\infty}^{*}(P)$.  
The following example illustrates this non-monotonic behavior. It is visualized in Figure~\ref{fig:example}. 
\begin{example}\label{ex}
	Let $P := \{(0.1,0.4),(0.2,0.9),(0.7,0.6),(0.8,0.7)\}$.
	Then, it holds that $d_\infty^* \left(P \right)= 0.40$, whereas 
	$d_\infty^* \left(P \cup \{(0.9, 0.2)\} \right) = 0.43$ and
	$d_\infty^* \left(P \cup \{(0.3, 0.3)\} \right)= 0.33$. 
	As we can see in Figure~\ref{fig:example}, the discrepancy value of the first set is determined by the point $q=(1.0,0.4)$, whereas it is determined by points $(0.7,0.9)$ and $(0.3,0.9)$ in the second and third case, respectively. 
\end{example}

\begin{figure}
    \centering
    \includegraphics[width=0.325\textwidth]{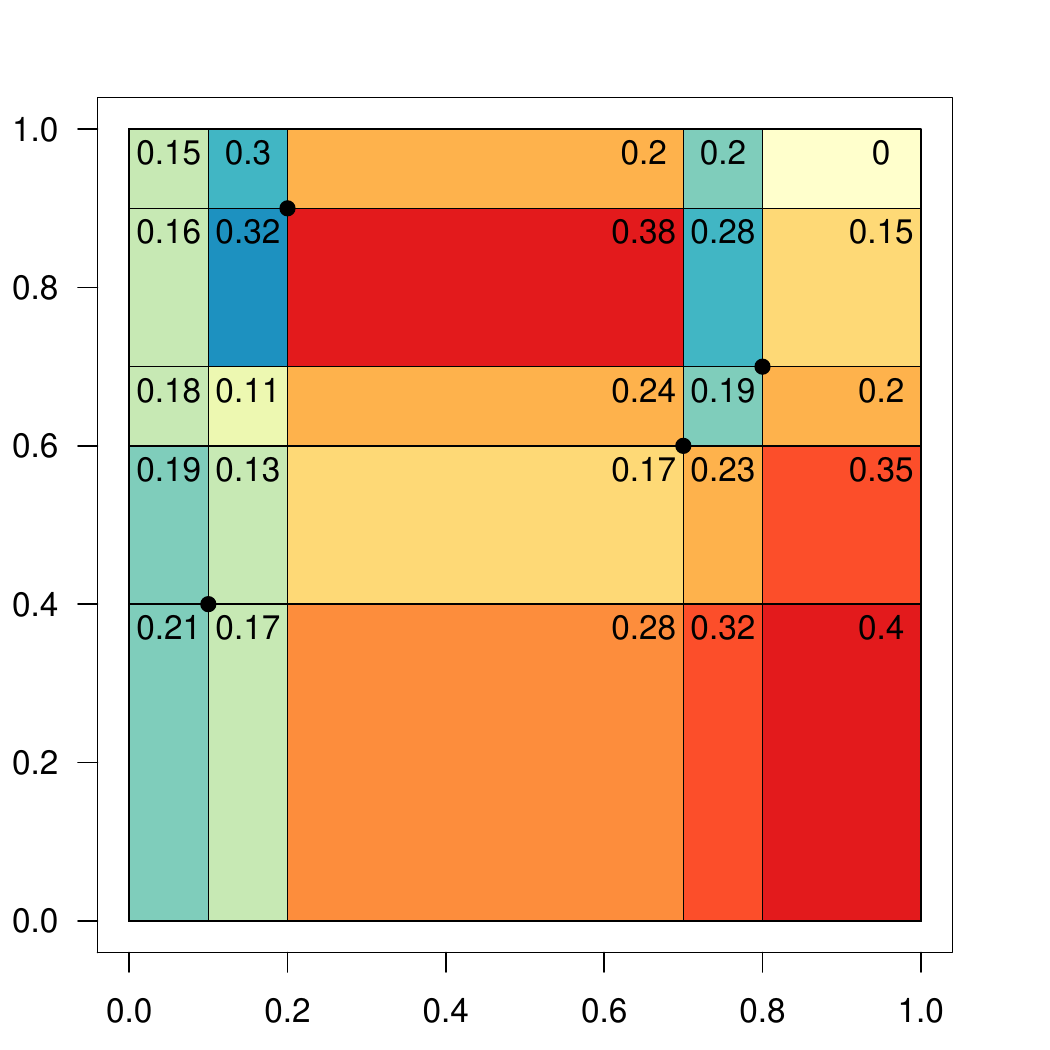}
    \includegraphics[width=0.325\textwidth]{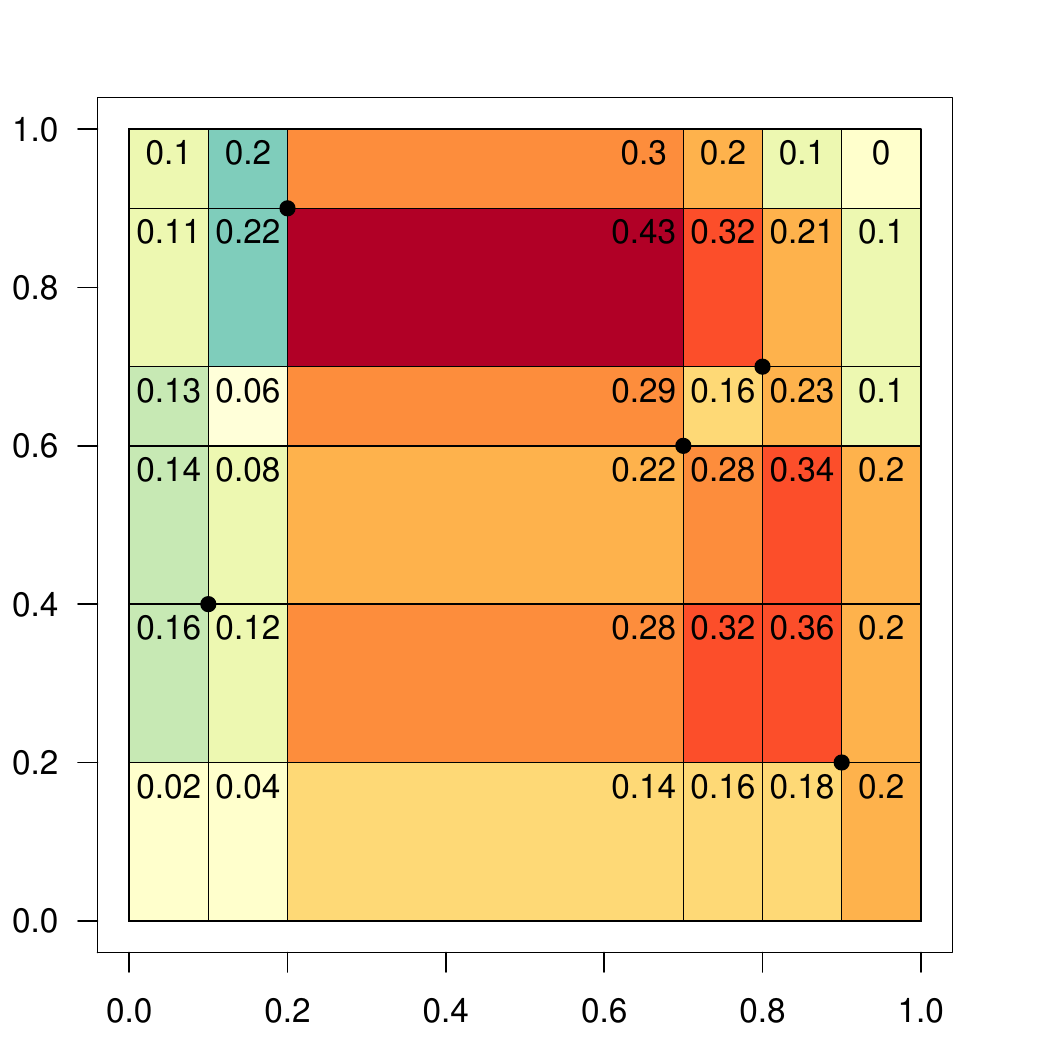}
    \includegraphics[width=0.325\textwidth]{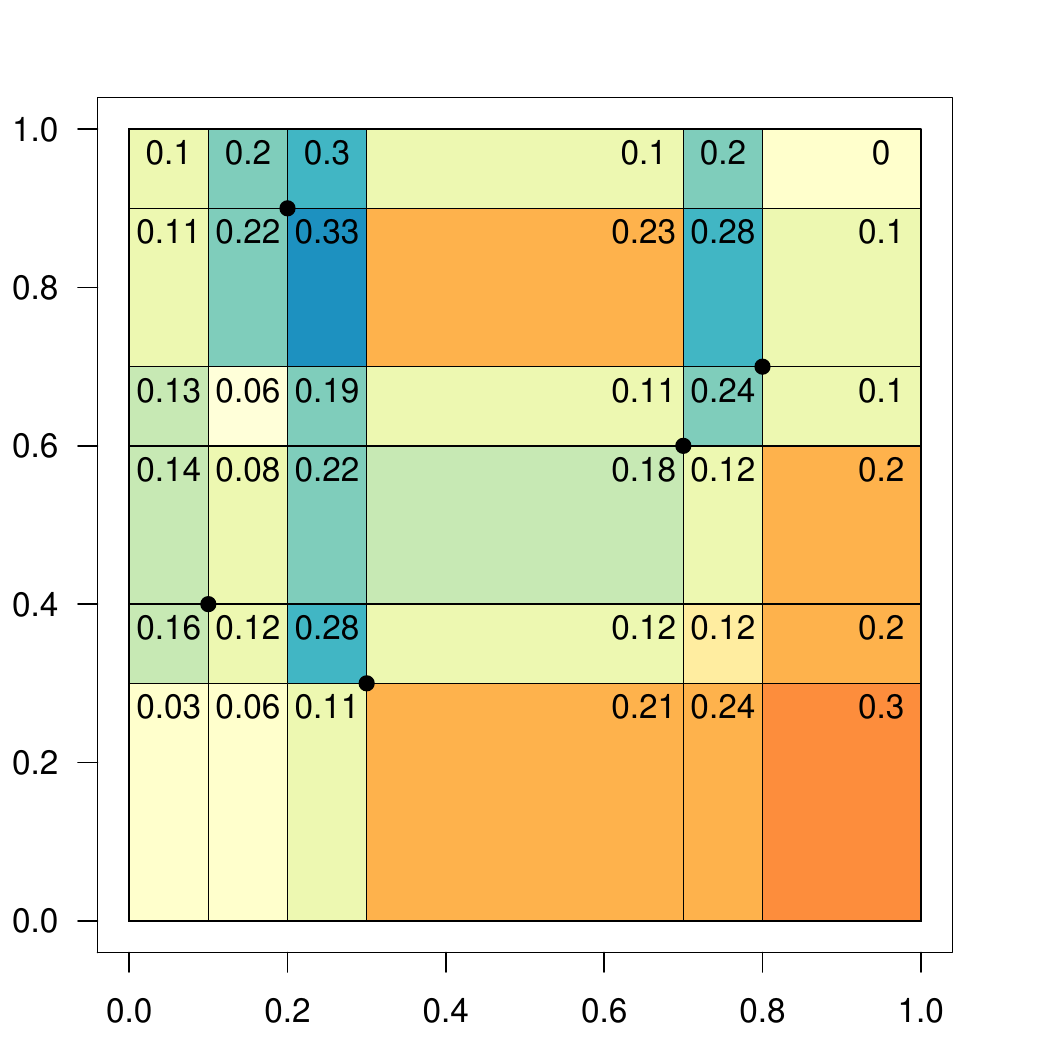}
    \caption{Illustration of the point sets defined in Example~\ref{ex}. The values are the local star discrepancy values and the color of each cell is the local star discrepancy value of its upper right corner (darker colors are used for larger discrepancy values). Blue colors are used when $d^*_{\infty}(y,P)=\overline{\delta}(y,P)$, and red colors are  used when $d^*_{\infty}(y,P)=\delta(y,P)$. 
    }
    \label{fig:example}
\end{figure}

\paragraph{Non-Monotonic Behavior of the Subset Selection Problem}
For a fixed set of points $P$ (therefore, fixed $n$), there is also no relation between the smallest discrepancy subsets obtained for different $m$. To illustrate this non-monotonic behavior, consider the following simple example in dimension~1. Let $P:=\{\tfrac{1}{6},\tfrac{1}{4},\tfrac{1}{2},\tfrac{3}{4},\tfrac{5}{6}\}$. Niederreiter introduced in~\cite{Nie72a} an explicit formula for the discrepancy in dimension~1, also showing that for an $n$-point set, the minimal discrepancy is uniquely obtained by the set $\{\tfrac{1}{2n},\tfrac{3}{2n},...,\tfrac{2n-1}{2n}\}$. The optimal subset of size 2 of $P$ is therefore given by $P_2'=\{\tfrac{1}{4},\tfrac{3}{4}\}$, whereas the optimal subset of size 3 is given by $P_3'=\{\tfrac{1}{6},\tfrac{1}{2},\tfrac{5}{6}\}$.

Another example for this non-monotonic relationship between optimal subsets of different sizes can be found in the results presented in this paper. For example, in the 2-dimensional setting, the best size $m=20$ subset of the first $n=140$ elements of the Fibonacci sequence is not contained in the best subset of size $m=120$.\footnote{We do not discuss the position of the points in these optimal subsets here in this work, but the sets can be found in the repository available at~\url{https://algo.dei.uc.pt/star}.}

\paragraph{Extending the Grid to its Original Size} 
For the design of our mixed-integer linear formulation, it will be convenient to consider the whole grid induced by $P$, and not only the one induced by $P^*$. 
\begin{lemma}
\label{lem:grid}
The star discrepancy subset selection problem is equivalent to the following:
\begin{equation}
	\label{eq:2}
\min_{\substack{P^* \subseteq P\\|P^*| = m}} \max \left \{\max_{q \in \overline{\Gamma}(P)}\delta(q, P^*), \max_{q \in \Gamma(P)}\overline{\delta}(q, P^*) \right\}.
\end{equation}
\end{lemma}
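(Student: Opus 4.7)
The plan is to prove the equivalence by showing that for every fixed subset $P^* \subseteq P$ of size $m$, the inner ``max--max'' in \eqref{eq:2} equals the inner one in \eqref{eq:10}; since both outer minimizations range over the same family of candidate subsets, this pointwise identity immediately settles the equivalence of the two optimization problems. By Equation~\eqref{discrepancy_formula} applied to $P^*$, the inner expression in \eqref{eq:10} is exactly $d_\infty^*(P^*)$. The inclusions $\overline{\Gamma}(P^*) \subseteq \overline{\Gamma}(P)$ and $\Gamma(P^*) \subseteq \Gamma(P)$ immediately give that the inner expression of \eqref{eq:2} is at least $d_\infty^*(P^*)$, so the whole substance of the lemma lies in proving the reverse inequality.

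My preferred approach is a combinatorial ``grid-projection'' argument that treats the two contributions separately. For the first term, given any $q \in \overline{\Gamma}(P)$ I define $q' \in \overline{\Gamma}(P^*)$ coordinatewise by ${q'}^j := \min\{x \in \overline{\Gamma}^j(P^*) : x \ge q^j\}$, which is well posed since $1 \in \overline{\Gamma}^j(P^*)$ for every $j$. By the minimality of ${q'}^j$, no point of $P^*$ can have its $j$-th coordinate in the half-open interval $[q^j, {q'}^j)$, whence $D(q, P^*) = D(q', P^*)$; combined with $\lambda(q') \ge \lambda(q)$, this yields $\delta(q', P^*) \ge \delta(q, P^*)$. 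The symmetric ``round-down'' construction ${q'}^j := \max\{x \in \Gamma^j(P^*) : x \le q^j\}$ handles the second contribution: an analogous argument gives $\overline{D}(q, P^*) = \overline{D}(q', P^*)$ and $\lambda(q') \le \lambda(q)$, hence $\overline{\delta}(q', P^*) \ge \overline{\delta}(q, P^*)$.

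The main subtlety is the edge case in the round-down step where, for some coordinate $j$, the set $\{x \in \Gamma^j(P^*) : x \le q^j\}$ is empty. In that case no $p \in P^*$ satisfies $p^j \le q^j$, so $\overline{D}(q, P^*) = 0$ and $\overline{\delta}(q, P^*) = -\lambda(q) \le 0$, and one must instead exhibit some $q'' \in \Gamma(P^*)$ that already dominates this value inside the inner expression of \eqref{eq:10}. Taking $q''$ with ${q''}^j := \max_{p \in P^*} p^j$ for every $j$ places $q''$ in $\Gamma(P^*)$ with $\overline{D}(q'', P^*) = m$, hence $\overline{\delta}(q'', P^*) = 1 - \lambda(q'') \ge 0 \ge \overline{\delta}(q, P^*)$. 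This absorbs the edge case, and combining the three pieces yields that the inner expression of \eqref{eq:2} is at most $d_\infty^*(P^*)$ as well, completing the argument.
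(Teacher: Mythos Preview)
Your proof is correct and follows essentially the same route as the paper: fix $P^*$, observe that the ``$\ge$'' direction is immediate from the grid inclusions, and for ``$\le$'' project each $q$ coordinatewise onto $\overline{\Gamma}(P^*)$ (round up) or $\Gamma(P^*)$ (round down) to obtain a point with the same point count but weakly more favorable volume. Your treatment of the edge case---where some coordinate has no valid round-down in $\Gamma^j(P^*)$---is in fact more careful than the paper's, which simply adjoins $0$ to the candidate set without noting that the resulting $\ell$ may then fall outside $\Gamma(P^*)$.
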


\begin{proof}
Let $P^* \subseteq P$, $|P^*| = m$. 
We show that  
$\max_{q \in \overline{\Gamma}(P)}\delta(q, P^*) = \max_{q \in \overline{\Gamma}(P^*)}\delta(q, P^*)$ and that 
$\max_{q \in \Gamma(P)}\overline{\delta}(q, P^*) = \max_{q \in \Gamma(P^*)}\overline{\delta}(q, P^*)$. Since $P^* \subseteq P$, we only need to prove ``$\le$''. 
To show the first equation, let $q \in \overline{\Gamma}(P)$. 
For every coordinate $j \in [1..d]$ let $u^j:=\min\{u \in \overline{\Gamma}^j(P^*) \mid u \ge q^j\}$. 
Then $D(u,P^*)=D(q,P^*)$ and hence
$\delta(u,P^*) 
= \lambda(u) - \frac{1}{|P^*|}D(u,P^*) 
\ge \lambda(q) - \frac{1}{|P^*|}D(q,P^*)
= \delta(q,P^*)$. 
This shows that $\max_{q \in \overline{\Gamma}(P)}\delta(q, P^*) \le \max_{q \in \overline{\Gamma}(P^*)}\delta(q, P^*)$. 

For the second equation, let $q \in \Gamma(P)$. 
Set $\ell^j:=\max\{\ell \in {\Gamma}^j(P^*) \cup\{0\} \mid \ell \le q^j\}$. 
Then $\overline{D}(\ell,P^*)=\overline{D}(q,P^*)$ and thus 
$\overline{\delta}(\ell,P^*) 
= \frac{1}{|P^*|}\overline{D}(\ell,P^*) - \lambda(\ell) 
\ge \frac{1}{|P^*|}\overline{D}(q,P^*) - \lambda(q) 
= \overline{\delta}(q,P^*)$.
\end{proof}

\section{Algorithmic Approaches to solve the Discrepancy Subset Selection Problem}
\label{sec:algos}

In this section, we suggest two different approaches to solve the star discrepancy subset selection problem, one based on mixed-integer linear programming (Section~\ref{sec:milp}) and one based on branch and bound (Section~\ref{sec:BBalgo}). This section introduces these exact solvers. In Section~\ref{sec:comp-algos}, we 
compare their performance against two heuristics, random subset selection and an iterative greedy selection, which we use to obtain an initial solution for the branch and bound algorithm. The greedy approach is described in Section~\ref{sec:greedy}.

\textbf{Convention:} To ease the description of our algorithms, we assume that, for all $j \in [1..d]$, the coordinates $\{p_i^j \mid i \in [1..n]\}$ are pairwise different.

\subsection{A mixed integer linear programming formulation}
\label{sec:milp}

For simplification
purpose, we start with the description of the mixed integer linear programming 
(MILP) model for the star discrepancy subset selection problem for $d=2$. 
We then discuss extensions for larger 
dimensions. 

The following component-wise order relations in \(\mathbb R^d\) will be
required for our model. For $v, w \in \mathbb R^d$, we write 
\begin{align*} 
        v \leqq w & \iff v^j \leq w^j \;\text{ for all } j\in[1..d]\\
	v \leq  w & \iff v\neq w \mbox{ and } v \leqq w \\
	v <     w  & \iff v^j < w^j    \;\text{ for all }  j\in [1..d] 
\end{align*} 

Consider a two-dimensional point set 
$P := \{p_1, p_2,\ldots, p_{n}\} \subseteq \mathbb R^2$.
Without loss of generality, we assume the points in $P$ are 
reordered such that $p_1^1 \leq p_2^1 \leq \cdots \leq p_n^1$. 
Let $\mathcal S_n$ denote the symmetric group of order $n$ and $\sigma \in \mathcal S_n$
denote a permutation of $[1..n]$, such that $p_{\sigma(1)}^2 \leq p_{\sigma(2)}^2 \leq \cdots \leq p_{\sigma(n)}^2$.\footnote{Where our convention of pairwise different coordinates does not apply, we assume the following: In the case of a tie $p_i=p_{i+1}$ in the first coordinate, we assume that $p_i^2 \leq p^2_{i+1}$. In the case of a tie in the second coordinate $p_{\sigma(i)}^2=p_{\sigma(i+1)}^2$ we assume that that $p^1_{\sigma(i)} \leq p^1_{\sigma(i+1)}$.}

We shall use $\gamma_{i,j}(P)$ to denote the grid point at position $(i,j)$ in
$\overline \Gamma(P)$, $i,j \in [1..n+1]$.\footnote{For simplicity, we restrict the 
presentation to $\overline \Gamma(P)$, since $\Gamma(P) \subset \overline \Gamma(P)$.}
Then, due to the ordering 
of the points in $P$, it holds that 
$\gamma_{i,\sigma(i)}(P) = p_i$, 
$\gamma_{i,n+1}(P) = (p_i^1,1)$,
$\gamma_{n+1,\sigma(i)}(P) = (1, p_{\sigma(i)}^2)$,
for $i\in[1..n]$, and $\gamma_{n+1,n+1}(P) = (1,1)$.
In addition, we define the following index sets 
$$\overline \Delta(P,i,j) := \{ \ell \mid p_\ell \leqq \gamma_{i,j}(P), p_\ell \in P \}
\qquad \text{ and } \qquad
\Delta(P,i,j) := \{ \ell \mid p_\ell < \gamma_{i,j}(P), p_\ell \in P \}$$
for $i,j\in[1..n+1]$. 

\hide{
For our MILP model of Problem \eqref{eq:10} (MILP1), we define a variable 
$y_{i,j}$, $i,j \in [1..n+1]$. 
This variable determines the grid points in $\overline \Gamma(P)$ 
that should be taken into account in the objective function.
More specifically, for a given index pair $(\ell, \pi(\ell))$,
$\ell \in [1..n]$, $y_{\ell,\pi(\ell)}$ takes value $1$
if point $p_\ell$ is chosen, that is, grid point $\gamma_{\ell,\pi(\ell)}(P)$
must be considered, and $0$ otherwise. Similarly, for a given index pair
$(\ell, k)$,
$\ell, k \in [1..n]$, $k \neq \pi(\ell)$, $y_{\ell,k}$ takes value $1$
if both points $p_\ell$ and $p_k$ are selected, that is, the grid point
$\gamma_{\ell,k}(P)$ must be considered, and $0$ otherwise.
Note that if $y_{\ell,k}=1$, then it must hold that $y_{k,\ell}=1$, $y_{\ell,\sigma(\ell)}=1$, 
$y_{k,\sigma(k)} = 1$,
$y_{\ell,n+1}=1$, 
$y_{n+1,\sigma(\ell)}=1$, 
$y_{k,n+1}=1$, 
$y_{n+1,\sigma(k)}=1$. The MILP model is as follows. 

\begin{equation}\label{eq:ilp}
    \begin{array}{>{\displaystyle}r>{\displaystyle}r@{\extracolsep{0.75ex}}>{\displaystyle}c@{\extracolsep{0.75ex}}>{\displaystyle}l@{\extracolsep{2em}}>{\displaystyle}l}
\min  &  z\\
      \text{s.\,t.} 
      & z & \geq & - y_{i,j} - 2w_{i,j} + 1 + h_{i,j}-\frac{1}{m} \sum_{\ell \in \Delta(P,i,j)} y_{\ell,\sigma(\ell)} & \;\text{ for all }  i,j \in [1..n+1] \\
       & z & \geq & - y_{i,j} - 2w_{i,j} + 1 - h_{i,j}+\frac{1}{m} \sum_{\ell \in \overline \Delta(P,i,j)} y_{\ell,\sigma(\ell)} & \;\text{ for all } i,j \in [1..n] \\
       & \sum_{i=1}^n y_{i,\sigma(i)} & = & m \\	 	   
      & y_{i,j} &  = & 1 -  w_{i,j} & 
      i,j \in [1..n+1] \\
      & y_{i,\sigma(j)} & \leq & y_{j,\sigma(j)} & i,j \in [1..n+1] \\ 
      & y_{i,\sigma(j)} & \leq & y_{i,\sigma(i)} & i,j \in [1..n+1] \\ 
      & y_{i,\sigma(j)} & \geq & y_{i,\sigma(i)} + y_{j,\sigma(j)} - 1 & 
      i,j \in [1..n+1]\\
      & y_{i,j} &\in&\{0,1\} & 
      i,j \in [1..n+1]\\
      & z & \in & \mathbb R_{\geq 0}\\
	   \end{array}
  \end{equation}

Variable $z$ is a non-negative continuous variable that takes the optimal 
star discrepancy value. 
The first two constraints are
due to the linearization of the objective function in Problem \eqref{eq:1} and
bound the minimum value of $z$, where 
$h_{i,j}$ is the measure of the $d$-dimensional box $[0,\gamma_{i,j}(P)]$. 
Variable $w_{i,j}$ ensures that the bound on variable $z$ is only taken
into account if $y_{i,j}=1$. The third constraint ensures that exactly
$m$ points in $P$ are selected and the remaining four constraints ensure the 
linking between variables.
This model has $O(n^2)$ constraints and $O(n^2)$  variables. 
For an arbitrary number of dimensions, the model has
$O(dn^d)$ constraints and $O(n^d)$ variables.
}

For our MILP model, we define a binary variable $x_{i}$ 
that takes value $1$ if point $p_i$ is selected,
$i \in [1..n]$, and $0$ otherwise. 
The model is as follows.

\begin{equation}\label{eq:ilp2}
    \begin{array}{>{\displaystyle}r>{\displaystyle}r@{\extracolsep{0.75ex}}>{\displaystyle}c@{\extracolsep{0.75ex}}>{\displaystyle}l@{\extracolsep{2em}}>{\displaystyle}l}
\min  &  z\\
      \text{s.\,t.} 
      & z & \geq & h_{i,j}-\frac{1}{m} \sum_{\ell \in \Delta(P, i, j)} x_{\ell} & \;\text{ for all } i,j \in [1..n+1] \\
      & z & \geq & - h_{i,j}+\frac{1}{m} \sum_{\ell \in \overline \Delta(P, i, j)} x_{\ell} & \;\text{ for all }  i,j \in [1..n] \\
      & \sum_{i=1}^n x_{i} & = & m \\	 	   
      & x_{i} &\in&\{0,1\} & \;\text{ for all }
      i \in [1..n]\\
      & z & \in & \mathbb R_{\geq 0}\\
	   \end{array}
  \end{equation}

Variable $z$ is a non-negative continuous variable that takes the optimal 
star discrepancy value. 
The first two constraints are
due to the linearization of the objective function in Problem \eqref{eq:2} and
bound the minimum value of $z$, where 
$h_{i,j}$ is the measure of the $d$-dimensional box $[0,\gamma_{i,j}(P)]$. 
The third constraint ensures that exactly
$m$ points in $P$ are selected.

  Extending our MILP for more dimensions, we obtain 
$O(n^d)$ constraints and $O(n)$ variables. Noteworthy, its relaxation,
that is, $x_i \in [0,1]$,
has an integral solution when $m=n$. This suggests that
the \emph{integrality gap}, i.e., the difference between the optimal value $z^*$ for the original MILP (satisfying that there exists $x^* \in \{0,1\}^n$ such that the conditions are satisfied) and the optimal value $z^*_{\text{relax.}}$ of its relaxation (where we only require existence of $x \in [0,1]^n$ for which the conditions are satisfied),  
may be small when the ratio $m/n$ is large. This suggestion is confirmed in the experimental results reported in Section \ref{sec:comp-algos}.

\subsection{A combinatorial branch-and-bound algorithm}
\label{sec:BBalgo}

Algorithm \ref{pseudocode} presents the pseudocode of our combinatorial
branch-and-bound approach (BB) for the star discrepancy subset selection problem,
for a given $m$ and a given point set $P$. 
At a given iteration, the algorithm maintains three stacks: $S_A$, which
stores the points that were accepted (subset $P_A$), $S_B$, which 
stores the points that were rejected (subset $P_R$), and $S_N$, which 
stores the points for which a decision has not yet been taken (subset $P_N$). 
Both $S_A$ and $S_R$ are
empty in the beginning, whereas $S_N$ contains all points in $P$.  Variable
$ub$ corresponds to the lowest upper bound on the optimal discrepancy value
found so far and is initially set to 1, which clearly is an upper bound for the star discrepancy of any subset of $P$, since it is an upper bound for the star discrepancy of \emph{any} point set.

The branching part of the algorithm works as follows: at each recursive step, the point $p$ at the top of stack $S_N$ is removed and is 
placed at the top of stack $S_A$ ($p$ is accepted). Then, the 
remaining smaller sub-problems are solved recursively, with the points 
in $S_A$ (and none of the points of $S_R$) belonging to the solutions of these sub-problems.
When back to the same recursion level, the point $p$ is removed from 
the top of stack $S_A$ and placed at the top of stack $S_R$ ($p$ is
rejected) and the same procedure is repeated again for the smaller
sub-problems.

The usual stopping conditions avoid the generation of infeasible solutions, namely, either having $m$ points or there are not enough points in $S_N$ to reach a solution with $m$ points. In the former case, the star discrepancy value of the $m$ points is computed and compared
against the upper bound ($ub$), which is updated accordingly. The function $LB(\cdot)$ allows the
pruning of the search tree by computing a lower bound on the smallest value of
star discrepancy of a feasible solution that contains the points stored in
$S_A$.  The following section describes the lower bound computations.

\begin{algorithm}[t]
\caption{Branch and Bound}
\label{pseudocode}
\begin{algorithmic}\STATE \(S_A := \emptyset\), \(S_R := \emptyset\), \(S_N := (p_1, ..., p_n)\), $ub: = 1$ 
\end{algorithmic}
{\bf Function} $BB(S_A, S_R, S_N)$
\begin{algorithmic}[1]
\IF {$\lvert P_A \rvert = m$}
\STATE $ub := \min \{ub, d_\infty^*(P_A)\}$
    \RETURN
\ELSIF {$P_N = \emptyset \mbox{ or } \lvert P_A \rvert + \lvert P_N \rvert < m$}
    \RETURN
\ELSIF {$LB(P_A, P_R, P_N) > ub$}
    \RETURN
\ELSE
\STATE $q := \mbox{pop}(S_N)$
\STATE $\mbox{push}(q,S_A)$
\STATE $BB(S_A, S_R, S_N)$
\STATE $p := \mbox{pop}(S_A)$
\STATE $\mbox{push}(p,S_R)$
\STATE $BB(S_A, S_R, S_N)$
\STATE $q := \mbox{pop}(S_R)$
\STATE $\mbox{push}(q,S_N)$
\RETURN
\ENDIF
\end{algorithmic}
\end{algorithm}

\subsubsection{Lower bounds}
Consider that, at a given moment of Algorithm 1, 
stacks $S_A$, $S_R$, and $S_N$ contain point sets $P_A$, $P_R$, and $P_N$, 
respectively. Note that $P_N := P \setminus (P_A \cup P_R)$. 
Let $P_A^*$ be the set of $m$ points with the smallest
value of star discrepancy that contains $P_A$ and does not intersect $P_R$, that
is, 
\begin{equation}
P_A^* := \arg\min \left \{ d^*_\infty(P^\prime) \  |  \ P_A \subseteq P^\prime \subseteq P \setminus P_R, |P^\prime| = m \right\}. 
\end{equation}

Our bounding function is the maximum of two values, that is, 
$$LB(P_A,P_R,P_N) = \max \{ LB_1(P_A,P_R,P_N),LB_2(P_A,P_R, P_N)\}.$$

The second value, $LB_2(P_A,P_R, P_N)$, is a lower bound on the 
local discrepancy of points in $\Gamma(P_A)$, \begin{equation}
	LB_2(P_A,P_R, P_N) := \max_{q\in \Gamma(P_A)} \left\{ \frac{1}{m} \overline{D}(q, P_A) - \lambda(q)\right\} \leq d^*_\infty(P^*_A).
\end{equation}
Note that $\overline{D}(q,P_A) \leq \overline{D}(q,P_A^*)$ holds
for every point $q \in \Gamma(P_A^*)$.

The first value, $LB_1(P_A, P_R, P_N)$, is also a lower bound on the local
discrepancy  of points in $\overline \Gamma(P_A)$.  
For a given set $P_A$ and set $P_N$, at each point $q$ in $\overline
\Gamma(P_A)$, an upper bound on the value of $D(q,P_A^*)$ is as follows
\begin{equation}
\label{eq:lb1}
	\min \left\{m, D(q,P_A) + D(q,P_N) \right\} \geq D(q, P_A^*)
\end{equation}
from which the following lower bound $\eta(q, P_A, P_N)$ on the value of the local discrepancy at 
point $q$ can be derived:
\begin{equation}
	\eta(q, P_A, P_N) := \lambda(q) - \frac{1}{m} \min\left\{m, D(q,P_A)+D(q,P_N)\right\} \leq \delta(q,P_A^*). 
\end{equation}
Finally, we define our lower bound $LB_1(P_A,P_R, P_N)$ as follows
\begin{equation}
	LB_1(P_A, P_R, P_N) := \max_{q \in \overline \Gamma(P_A)} \left\{\eta(q, P_A, P_N)\right\}\ 
	\leq  d^*_\infty(P_A^*).
\end{equation}

\subsubsection{Lower bound computation}

A na\"ive computation of $LB(P_A,P_R,P_N)$ requires 
$O(dm^{d+1})$ time
for each new point $p$ that enters into $P_A$ 
due to the need to compute 
$D(q,P_A)$ and $D(q,P_N)$ 
for every point $q$ in $\overline \Gamma(P_A)$ as well as  
$\overline{D}(q^\prime,P_A)$ and 
$\overline{D}(q^\prime,P_N)$
for every point $q^\prime$ in $\Gamma(P_A)$.
However, note that $D(q, P_N)$ and
$\overline D(q^\prime, P_N)$ can be pre-computed 
for every point $q \in
\overline \Gamma(P)$ and every point $q^\prime \in\Gamma(P)$,
respectively, and for every subset $P_N$, which 
requires $\Theta(n^{d+1})$ space and time in the pre-processing step.
Moreover, for each new point $p$ that enters into $P_A$ at each
recursive step, $D(q, P_A)$ and $\overline D(q^\prime, P_A)$ need only to be
computed at points $q \in \overline \Gamma(P_A)$ and $q^\prime \in \Gamma(P_A)$
such that $p \leq q$ and $p \leq q^\prime$ holds, respectively.
In the following, we show that both components of the lower bound 
$LB(P_A,P_R,P_N)$ can be computed even faster in practice by considering a
given ordering of the points in $P$.

As done in Section~\ref{sec:milp} and assuming that the points have pairwise different coordinates (as per our convention made at the beginning of this section), let us assume, without loss of generality, that $p_i^1  < p_j^1$ holds for $i < j$, $i\in [1..n-1]$. 
For this reason, for a given $P_A$ and $P_N$, it
holds that $\overline{D}(q,P_N) = 0$ and $D(q, P_N) = 0$ for every point $q \in \Gamma(P_A)$.  In the following, we discuss particular properties that arise from this ordering and that will lead to an
incremental evaluation of both lower bounds. 

For a given $P_A$ and $P_N$,
let $p \in P_N$ be the smallest point with respect to the ordering
above. 
We consider the following subsets of $\overline \Gamma(P_A \cup \{p\})$:
$$ \overline G_0(p, P_A) := \overline \Gamma (P_A \cup \{p\}) \setminus \overline \Gamma (P_A) \text{\qquad and \qquad }
\overline G_1(p, P_A) := \left\{q \in \overline \Gamma(P_A) \mid p < q \right\}. $$
Note that due to the ordering of the points in $P$, we have that 
$\overline{G}_1(p,P_A)$ contains only points in $\overline{\Gamma}(P_A)$
that strictly dominate point $p$ and 
have $1$ in the first coordinate.

\paragraph{Update of $LB_1$} 
We state the following propositions for the incremental computation of 
$LB_1(P_A,P_R,P_N)$ in the case of 
inserting point $p$ into $P_A$ and into $P_R$, respectively.
\begin{proposition}
	\label{prop:1}
\begin{eqnarray*}
	LB_1(P_A \cup \{p\},P_R, P_N \setminus \{p\}) & = & \max 
		\begin{cases}
		LB_1(P_A,P_R,P_N) \\
		\displaystyle
		\max_{q \in \overline G_0(p, P_A)}  
		\left\{\eta(q, P_A \cup \{p\}, P_N \setminus \{p\})\right\}
		\end{cases}\\
\end{eqnarray*}
\end{proposition}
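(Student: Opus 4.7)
The plan is to prove Proposition~\ref{prop:1} by decomposing the larger grid $\overline{\Gamma}(P_A \cup \{p\})$ into two disjoint pieces and then showing that on the ``old'' piece the value of $\eta$ is unchanged when $p$ is transferred from $P_N$ to $P_A$.

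First I would note that $\overline{\Gamma}(P_A) \subseteq \overline{\Gamma}(P_A \cup \{p\})$, since $\overline{\Gamma}$ is defined coordinate-wise as a Cartesian product of coordinate sets to which we only add the coordinates of $p$ and $(1,\dots,1)$. Consequently, by the very definition of $\overline{G}_0(p, P_A)$, we obtain the disjoint decomposition
\begin{equation*}
\overline{\Gamma}(P_A \cup \{p\}) = \overline{\Gamma}(P_A) \;\sqcup\; \overline{G}_0(p, P_A),
\end{equation*}
so that the maximum defining $LB_1(P_A \cup \{p\}, P_R, P_N \setminus \{p\})$ splits as the maximum of the two maxima taken over these two subsets. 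The maximum over $\overline{G}_0(p, P_A)$ is already in the required form, matching the second entry in the right-hand side of the proposition.

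The core step is then to show that for every fixed $q \in \overline{\Gamma}(P_A)$,
\begin{equation*}
\eta(q, P_A \cup \{p\}, P_N \setminus \{p\}) = \eta(q, P_A, P_N).
\end{equation*}
Since $\eta$ depends on $P_A$ and $P_N$ only through $D(q, P_A) + D(q, P_N)$, it suffices to show that this sum is invariant under moving $p$ from $P_N$ to $P_A$. I would split into two cases according to whether $p < q$ or not. If $p < q$, then $\mathbf{1}_{[0,q)}(p) = 1$, so $D(q, P_A \cup \{p\}) = D(q, P_A) + 1$ and $D(q, P_N \setminus \{p\}) = D(q, P_N) - 1$ (using $p \in P_N$), and the sum is unchanged. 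Otherwise $\mathbf{1}_{[0,q)}(p) = 0$, and both $D(\cdot, P_A)$ and $D(\cdot, P_N)$ values individually remain the same. Either way the claim holds, and taking the max over $q \in \overline{\Gamma}(P_A)$ reproduces exactly $LB_1(P_A, P_R, P_N)$, which is the first entry in the right-hand side.

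The only mild subtlety, and the place where a careless argument might go wrong, is the distinction between strict and non-strict dominance: $D(q, \cdot)$ counts points in the half-open box $[0, q)$, so the relevant predicate is the strict inequality $p < q$ used in the definition of $\overline{G}_1(p, P_A)$. Once this is pinned down, combining the decomposition with the pointwise invariance immediately yields the claimed identity.
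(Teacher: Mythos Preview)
Your proof is correct and follows essentially the same approach as the paper: you decompose $\overline{\Gamma}(P_A \cup \{p\})$ into $\overline{\Gamma}(P_A)$ and $\overline{G}_0(p,P_A)$, then show $\eta$ is unchanged on the old grid by a two-case split on whether $p < q$. The paper's argument is identical, merely phrasing the case split in terms of membership in $\overline{G}_1(p,P_A) = \{q \in \overline{\Gamma}(P_A) \mid p < q\}$ rather than the predicate $p < q$ directly.
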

\begin{proof}
	We prove that for every point $q$ in $\overline \Gamma(P_A)$,
	it holds that $\eta(q, P_A, P_N) = \eta(q, P_A \cup \{p\}, P_N \setminus \{p\})$
	and, therefore, only the points in $\overline {G}_0(p, P_A)$ need to be
	considered. For this, we partition  $\overline \Gamma(P_A)$ in two 
	disjoint subsets, $\overline \Gamma(P_A) \setminus \overline G_1(p, P_A)$
	and $\overline G_1(p, P_A)$.
\begin{itemize}
	\item [i)] If $q \in \overline \Gamma(P_A) \setminus \overline G_1(p, P_A)$, then  
	$D(q, P_A \cup \{p\}) = D(q, P_A)$ and $D(q, P_N
		\setminus \{p\}) = D(q, P_N)$.  
	\item [ii)] If $q \in \overline
		G_1(p, P_A)$, then $D(q, P_A \cup \{p\}) = D(q, P_A)+1$ and
		$D(q, P_N \setminus \{p\}) = D(q, P_N) - 1$, and thus 
		$\min\{m,  D(q, P_A \cup \{p\}) +  D(q, P_N \setminus \{p\}) \}
		= \min\{m, D(q, P_A)+D(q, P_N)\}
		$
		.
\end{itemize}
\end{proof}

\begin{proposition}
\begin{eqnarray*}
	LB_1(P_A, P_R \cup \{p\}, P_N \setminus \{p\}) & = & \max 
		\begin{cases}
		LB_1(P_A,P_R,P_N) \\
		\displaystyle
		\max_{q \in \overline G_1(p,P_A)}  
		\left\{\eta(q, P_A, P_N \setminus \{p\})\right\}
		\end{cases}\\
\end{eqnarray*}
\end{proposition}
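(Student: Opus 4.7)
The plan is to mirror the proof of Proposition~\ref{prop:1}, with the key simplification that moving $p$ from $P_N$ into $P_R$ (rather than into $P_A$) leaves $P_A$ unchanged, so $\overline{\Gamma}(P_A)$ is the same on both sides of the claimed equality and no new grid points need to be examined. Only the $P_N$-counts change, which affects $\eta$ only on the grid points that strictly dominate $p$.

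The first step is to partition $\overline{\Gamma}(P_A)$ into the two disjoint subsets $\overline{\Gamma}(P_A) \setminus \overline{G}_1(p, P_A)$ and $\overline{G}_1(p, P_A)$, and to compare $\eta(q, P_A, P_N \setminus \{p\})$ with $\eta(q, P_A, P_N)$ on each of them. For $q \in \overline{\Gamma}(P_A) \setminus \overline{G}_1(p, P_A)$, the defining property of $\overline{G}_1(p,P_A)$ gives $p \not< q$, so $p \notin [0,q)$ and hence $D(q, P_N \setminus \{p\}) = D(q, P_N)$; consequently $\eta(q, P_A, P_N \setminus \{p\}) = \eta(q, P_A, P_N)$. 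For $q \in \overline{G}_1(p, P_A)$, we have $p < q$, so $D(q, P_N \setminus \{p\}) = D(q, P_N) - 1$, and because $a \mapsto \min\{m, a\}$ is non-decreasing, $\eta(q, P_A, P_N \setminus \{p\}) \geq \eta(q, P_A, P_N)$.

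The second step is to take the maximum of $\eta(\cdot, P_A, P_N \setminus \{p\})$ over $\overline{\Gamma}(P_A)$ and split it across the partition:
\begin{equation*}
LB_1(P_A, P_R \cup \{p\}, P_N \setminus \{p\}) = \max\left\{ A,\ B \right\},
\end{equation*}
where $A := \max_{q \in \overline{\Gamma}(P_A) \setminus \overline{G}_1(p, P_A)} \eta(q, P_A, P_N)$ (using the equality on the complement) and $B := \max_{q \in \overline{G}_1(p, P_A)} \eta(q, P_A, P_N \setminus \{p\})$ (the second term of the claimed right-hand side). Clearly $A \leq LB_1(P_A, P_R, P_N)$ since $A$ is a maximum over a subset of $\overline{\Gamma}(P_A)$, and the remaining portion $\max_{q \in \overline{G}_1(p,P_A)} \eta(q, P_A, P_N)$ of $LB_1(P_A, P_R, P_N)$ is dominated by $B$ by the inequality established in the first step. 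Hence $\max\{A, B\} = \max\{LB_1(P_A, P_R, P_N),\ B\}$, which is exactly the desired identity.

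There is no genuine obstacle here: the proposition is really a bookkeeping statement that isolates which grid points can change when $p$ leaves $P_N$. The only point requiring some care is to confirm that no grid point outside $\overline{\Gamma}(P_A)$ needs to be considered — which follows immediately from the fact that $P_A$ is not modified — and to handle the clamp $\min\{m, \cdot\}$ via monotonicity rather than by an exact case analysis.
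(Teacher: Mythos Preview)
Your proof is correct and follows essentially the same partition-based approach as the paper. The paper's own proof is terser: it only records that $\eta$ is unchanged on $\overline{\Gamma}(P_A)\setminus\overline{G}_1(p,P_A)$ and leaves the rest implicit, whereas you make explicit the monotonicity step on $\overline{G}_1(p,P_A)$ (that $\eta(q,P_A,P_N\setminus\{p\})\ge\eta(q,P_A,P_N)$ via the non-decreasing clamp $a\mapsto\min\{m,a\}$) and then cleanly deduce $\max\{A,B\}=\max\{LB_1(P_A,P_R,P_N),B\}$, which is exactly the missing glue the paper takes for granted.
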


\begin{proof}	
	We prove that for every point $q$ in $\overline \Gamma(P_A) \setminus 
	\overline G_1(q, P_A)$,
	it holds that $\eta(q, P_A, P_N) = \eta(q, P_A, P_N \setminus \{p\})$,
	and therefore, only the points in $\overline G_1(q,P_A)$ need to be 
	considered. The proof is similar to part i) of the proof 
	of the Proposition~\ref{prop:1}, except that only $D(q,P_N)$ and 
	$D(q,P_N \setminus \{p\})$ are taken into account.
If $q \in \overline \Gamma(P_A) \setminus \overline G_1(p, P_A)$, then 
$D(q, P_N \setminus \{p\}) = D(q, P_N)$.  
\end{proof}
\paragraph{Update of $LB_2$} 
For the second lower bound computation, 
we consider the following subset of $\Gamma(P_A \cup \{p\})$:
$$ G_0(p, P_A) := \Gamma (P_A \cup \{p\}) \setminus \Gamma (P_A).$$ 
We state the following equalities.
\begin{proposition}
\begin{eqnarray*}
	LB_2(P_A \cup \{p\},P_R, P_N \setminus \{p\}) & = & \max 
		\begin{cases}
		LB_2(P_A,P_R,P_N) \\
		\displaystyle
		\max_{q \in G_0(p,P_A)} \left\{ \frac1m \overline{D}(q,P_A \cup \{p\}) - \lambda(q)\right\}
		\end{cases}\\
\end{eqnarray*}

\end{proposition}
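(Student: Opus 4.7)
The plan is to mirror the structure of the proofs of the update propositions for $LB_1$ by partitioning the grid $\Gamma(P_A \cup \{p\})$ into two disjoint pieces on which the quantity to be maximized behaves in a controlled way. By the definition of $G_0(p,P_A)$ and using the convention that all coordinates are pairwise distinct, we have the disjoint decomposition $\Gamma(P_A \cup \{p\}) = \Gamma(P_A) \sqcup G_0(p,P_A)$. Hence the maximum of $\frac{1}{m}\overline D(q,P_A\cup\{p\})-\lambda(q)$ over $\Gamma(P_A\cup\{p\})$ splits into its maximum over $\Gamma(P_A)$ and its maximum over $G_0(p,P_A)$. The only nontrivial task is then to identify the first of these with $LB_2(P_A,P_R,P_N)$, i.e., to argue that replacing $P_A$ by $P_A \cup \{p\}$ inside $\overline D(q,\cdot)$ does not change its value at any $q \in \Gamma(P_A)$.

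For this I would invoke the ordering convention $p_i^1 < p_j^1$ for $i<j$ together with the assumption that $p$ is the smallest point of $P_N$ with respect to this ordering. Since the branch-and-bound processes the points in the order $p_1,p_2,\ldots,p_n$, every point of $P_A$ was decided before $p$ and therefore has strictly smaller first coordinate; formally, $r^1 < p^1$ for all $r \in P_A$. Consequently, any $q \in \Gamma(P_A)$ satisfies $q^1 \le \max_{r \in P_A} r^1 < p^1$, so $p \notin [0,q]$, and therefore $\overline D(q, P_A \cup \{p\}) = \overline D(q, P_A)$. Taking the maximum over $q \in \Gamma(P_A)$ then yields exactly $LB_2(P_A,P_R,P_N)$, and combining this with the maximum over $G_0(p,P_A)$ gives the claimed equality.

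The main (and admittedly mild) obstacle I anticipate is pinning down cleanly that $p^1$ strictly dominates every first coordinate appearing in $P_A$, since this is an invariant of Algorithm~\ref{pseudocode} rather than a local property of $\Gamma$. Everything else is entirely analogous to part~i) of the proof of Proposition~\ref{prop:1} and reduces to standard bookkeeping about indicator functions at grid points; in particular, there is no counterpart of the region $\overline G_1(p,P_A)$ to worry about here, because $p$ lies strictly to the right of every point of $P_A$ in the first coordinate.
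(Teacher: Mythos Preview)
Your proposal is correct and follows essentially the same route as the paper: split $\Gamma(P_A\cup\{p\})$ into $\Gamma(P_A)$ and $G_0(p,P_A)$, and use the first-coordinate ordering invariant (which the paper records just above as $\overline D(q,P_N)=0$ for $q\in\Gamma(P_A)$) to conclude $\overline D(q,P_A\cup\{p\})=\overline D(q,P_A)$ on $\Gamma(P_A)$. The paper's own proof is terser but relies on exactly this observation, so your argument is a fleshed-out version of the same idea.
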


\begin{proof}
	Similar to the proof of Proposition \ref{prop:1}. If 
	$q \in \Gamma(P_A)$, then we have that $\overline D(q, P_A \cup \{p\}) = \overline D(q, P_A)$ and $\overline D(q, P_N
		\setminus \{p\}) = \overline D(q, P_N)$.   
\end{proof}

The following proposition simply uses the fact that $LB_2$ is only defined via $P_A$. Moving a point from $P_N$ to $P_R$ does not have any effect on the value of this lower bound.  
\begin{proposition}
It holds that $LB_2(P_A, P_R \cup \{p\}, P_N \setminus \{p\}) = LB_2(P_A,P_R,P_N)$.
\end{proposition}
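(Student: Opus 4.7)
The plan is essentially a one-line unpacking of the definition. Recall from the earlier display in the subsection that
\[
LB_2(P_A,P_R,P_N) \;=\; \max_{q\in \Gamma(P_A)} \left\{ \tfrac{1}{m}\,\overline{D}(q,P_A) - \lambda(q)\right\}.
\]
The right-hand side depends only on the set $P_A$ (which determines both the grid $\Gamma(P_A)$ over which the maximum is taken and the counting function $\overline{D}(\cdot,P_A)$) together with the fixed target size $m$ and the volume function $\lambda$. Neither $P_R$ nor $P_N$ appears anywhere in this expression.

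Hence my plan is simply to observe that moving the point $p$ from $P_N$ into $P_R$ leaves $P_A$ unchanged. Consequently, the grid $\Gamma(P_A)$ is the same on both sides of the claimed equality, and for every $q \in \Gamma(P_A)$ the quantity $\tfrac{1}{m}\,\overline{D}(q,P_A) - \lambda(q)$ is unchanged as well. Taking the maximum over $\Gamma(P_A)$ therefore yields the same value in both cases, which is exactly the claim.

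There is no real obstacle here: the statement is formal and requires only pointing to the definition. The proposition is included for completeness, so that when Algorithm~\ref{pseudocode} rejects a point $p$ (transferring it from $S_N$ to $S_R$), one knows that the cached value of $LB_2$ need not be recomputed, complementing the non-trivial incremental updates established in the previous propositions for $LB_1$ and for insertions into $P_A$.
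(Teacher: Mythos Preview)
Your proposal is correct and matches the paper's own justification essentially verbatim: the paper remarks that $LB_2$ is defined solely via $P_A$, so moving $p$ from $P_N$ to $P_R$ has no effect on its value.
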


The sets $\overline{G}_0(p,P_A)$ and $G_0(p,P_A)$ are of size $O(dm^{d-1})$ at worse, as we have $d$ possible choices for a coordinate taken from $p$ and, given this fixed coordinate, we have $(|P_A|+1) \leq m$ choices for each of the other $d-1$ coordinates. The set $\overline{G}_1(p,P_A)$ is of size $O(m^{d-1})$ as the first coordinate is fixed and we could have up to $|P_A| \leq m$ choices for each of the other coordinates in the worst case.
The results above indicate that the lower bound can be computed
incrementally in $O(dm^{d-1})$ time at each recursive step, assuming that 
$D(q,P_N)$ and $\overline{D}(q^\prime,P_N)$ can be computed in constant time
after a pre-processing step as discussed in this section. 

\subsection{Greedy Heuristic}
\label{sec:greedy}

An initial upper bound for BB is given by a greedy heuristic that selects $m$ points
iteratively.  The greedy choice consists of selecting the point amongst those
that were not yet chosen that gives the best improvement in terms of star
discrepancy (note here that this improvement can be negative, as discussed in Example~\ref{ex}). 
Therefore, the selection of the next point involves the
evaluation of $O(n)$ star discrepancies, each of which takes 
$O(m^d)$ time with a na\"ive approach. Although better running times can be achieved, we found this
procedure to be reasonably fast for the size of the point sets considered in our 
experimental analysis. 
In the subsequent sections, we will include performance statistics for the greedy heuristic in our reports, to provide an impression for its quality in the various use-cases.

\section{Comparison of the Different Algorithms}
\label{sec:comp-algos}

We have presented above three different strategies to address the discrepancy subset selection problem: an MILP formulation, the branch-and-bound algorithm, and the greedy strategy. In this section we compare the efficiency of these three algorithms. We add to the comparison a na\"ive random sampling approach, which simply selects random subsets of the target size $m$. 

The MILP solver and the branch-and-bound algorithm do not always terminate within the given time limit. In these cases, they can nevertheless report the best solution that they have been able to find. 

\subsection{Experimental setup}

The {\tt Sobol}, {\tt Halton} and {\tt RevHal} point sets were generated by a program written in 
{\tt C} using GNU Scientific Library, namely, library {\tt gsl\_qrng} for the
generation of quasi-random sequences, with the procedures {\tt
gsl\_qrng\_sobol}, {\tt gsl\_qrng\_halton}, and {\tt gsl\_qrng\_reversehalton},
respectively. The sequences {\tt unif} were also generated in a similar way, 
using library {\tt gsl\_rng} for random number generation with the procedure
{\tt gsl\_rng\_uniform}. {\tt Faure} and {\tt iLHS} point sets were generated in
{\tt R} using procedure {\tt runif.faure} available in the {\tt DiceDesign} package
and procedure {\tt improvedLHS} available in the {\tt lhs} package, respectively.
{\tt Fibon} sets were generated by a code in {\tt Python} (version 2.7.16)
written by the authors. 
 
For the two-dimensional case, we considered 
$m \in \{20, 40, 60, 80, 100, 120\}$ and for each value of $m$, we considered $n \in \{m+20, m+40, \ldots, 140\}$. 
For the three dimensional case, we considered 
$m \in \{20, 40, 60, 80\}$ and for  each value of $m$, $n\in \{m+20,m+40, \ldots, 100\}$.
Preliminary experiments indicated that larger values of $n$ would increase the computational cost significantly, requiring several hours of computation time before the algorithms converge. For the two randomized constructions {\tt iLHS} and {\tt unif}, 
we have generated 10 instances for each combination 
of values of $m$ and $n$.

To compare the discrepancy values of the subsets with the original size-$m$ point sets, we also 
computed the discrepancy values of the latter, using the (exact) algorithm described in~\cite{discrepancy_algorithm} and provided to us by Magnus Wahlstr\"om. For consistency, we denote theses cases as ``$n=m$''. 

We used SCIP solver version 7.0.1 to solve the MILP formulation described in  Section~\ref{sec:milp}. The MILP formulation was written in an LP format, which is read and solved by SCIP solver with the default parameters.
The BB algorithm for two and three dimensions and 
with the incremental computation of lower bounds as described in Section \ref{sec:BBalgo} 
was written in {\tt C}. In a preliminary step, the points were 
sorted in increasing order with respect to the first dimension to prepare them for the application of our solvers. 

To run the experiments, we used a computer cluster 
Dell PowerEdge R740 Server with two Intel Xeon Silver 4210R 2.4G, 
10 Cores / 20 Threads, 9.6GT/s, 13.75M cache, with two 32GB RDIMM, 
two 480GB SSD SATA hard-drives, and Debian GNU/Linux 10 (buster) operating 
system. The running times in seconds of the SCIP solver and of the BB program were measured with command {\tt time} under 
linux, with a cut-off time limit of 30 minutes.
The time to generate the files with the MILP formulation were not 
taken into account. For the BB program, we used 
{\tt gcc} compiler version 8.3.0 with {\tt -O3} compilation flag.
We have only used arrays with static memory allocation. 

\subsection{Quality of Random Subset Sampling and the Greedy Heuristic}
\label{sec:random-vs-greedy}

To gain a feeling for the complexity of the subset selection problem, we first study the solution quality of randomly selected subsets of target size $m$ as well as that of the greedy heuristic described in Section~\ref{sec:greedy} (i.e., the strategy used to initialize the upper bound for the BB method). 


\begin{table}[t]
\centering
\begin{tabular}{llllll}
\toprule
quantile      & \tt Faure   & \tt Sobol'  & \tt Halton  & \tt RevHal & \tt Fibon \\
\midrule
best possible subset & 0.0357  & 0.0356 & 0.0359  & 0.0363      & 0.0351    \\
best found subset   & 0.0547  & 0.0540 & 0.0531  & 0.0542      & 0.0518    \\
1\%           & 0.0718  & 0.0715 & 0.0714  & 0.0713      & 0.0688    \\
10\%          & 0.0844  & 0.0836 & 0.0843  & 0.0838      & 0.0808   \\
25\%          & 0.0937  & 0.0928 & 0.0942  & 0.0935      & 0.0899   \\
50\%          & 0.1065  & 0.1055 & 0.1078  & 0.1063      & 0.1025   \\
75\%          & 0.1219  & 0.1212 & 0.1245  & 0.1222      & 0.1177   \\
90\%          & 0.1382  & 0.1370 & 0.1418  & 0.1384      & 0.1334   \\
100\%  & 0.24481 & 0.2459 & 0.2673 & 0.2531    & 0.2478  \\
\bottomrule
\end{tabular}
\caption{Percentiles of the star discrepancy values found by random subset sampling with 1\,000\,000 trials, for the instances with $n=100$ points and subset size $m=60$ in dimension $d=2$.}
\label{tab:quantiles-random}
\end{table}

Table~\ref{tab:quantiles-random} shows selected percentiles of star discrepancy values for 1\,000\,000 i.i.d. uniformly selected subsets of size $m=60$ for the five considered low-discrepancy sequences with $n=100$ points in dimension $d=2$. The distributions are quite similar for the five point sets. The main probability mass is around about twice the solution quality of the best found subset. The latter, in turn, have discrepancy values that are still between 47.5\% and 53.3\% worse than the best possible subset. Even if the evaluation of 1\,000\,000 subsets could be executed in less than two minutes, the results already suggest that random sampling is quite inefficient for the discrepancy subset selection problem. 

That the inefficiency of the random subset sampling is not an artifact of the setting described in Table~\ref{tab:quantiles-random} is indeed confirmed by the values in Tables~\ref{tab:2dsummary} and~\ref{tab:3dsummary} (available in the appendix), where we report, for all tested combinations of $m$ and $n$ in $2d$ and $3d$, respectively, the discrepancy values of the best random subset that could be found within a cut-off time of 30 minutes. For fixed $m$, the values do not significantly improve with increasing~$n$, in contrast to the value of the best possible (or best found) subset of the same size, which are reported in column $subset$. As a result, the relative disadvantage of the random subset selection procedure increases from around 
10\% for $m=20$ and $n=40$ to around 40\% for $n=140$ in the $2d$ case. For $m=40$, the disadvantage is already around 16\% on average for $n=60$ and 58\% for $n=140$. For $m=120$ and $n=140$, the relative disadvantage of random subset sampling is between 22\% for {\tt Sobol} and 32\% for {\tt Halton}. For the $3d$ case, the best subsets found by random sampling are around 19\% worse on average than the optimal ones for $m=20$ and $n=40$, and this value increases to around 30\% for $m=20$ and $n=60$ and to around 35\% for $m=20$ and $n=80$ and $n=100$.  

Comparing random subset sampling to the greedy strategy (column $greedy$ in Tables~\ref{tab:2dsummary} and~\ref{tab:3dsummary} in the appendix), we see that random subset sampling provides much better upper bound; however, we should keep in mind that several thousands of millions of subsets are evaluated within the 30 minutes time limit of the random subset sampling strategy, whereas the greedy strategy is deterministic and therefore evaluates only a single subset. As discussed above, the figures in Table~\ref{tab:quantiles-random} showed that already after two minutes the median performance of random subset sampling was around twice as large as the value of the best found subset, so that the comparison between random subset selection and the greedy strategy should indeed be done with care. 
For fixed $m \in \{20,40,60\}$, the greedy strategy tends to give worse solutions when the number of available points, $n$, increases. Across all evaluated settings, its discrepancy values are between 33\% and 172\% worse than the best (or best found) subset, with an average overhead of 92\% and a median of 88\%. The average and the median disadvantage of the greedy strategy compared to the result of the random subset sampling are both around 40\%.   

Most observations made for the low-discrepancy sequences carry over to the performance on the subset selection problem on {\tt iLHS} and {\tt unif}, as can be seen in Tables~\ref{tab:2dsummary_lhs} and~\ref{tab:2dsummary_unif} for the $2d$ case and in Tables~\ref{tab:3dsummary_lhs} and~\ref{tab:3dsummary_unif} for the $3d$ case, respectively, in the appendix. 
In particular, the performance of $random$ subset sampling decreases with increasing $n$ and fixed $m$, whereas the values of the best possible subsets improve. In fact, not only the relative but also the absolute value of the best subset found by random subset sampling increases with increasing $n$, and this consistently for all $m$ in $2d$ in the case of {\tt iLHS} point sets and for most values of $m$ in the {\tt unif} case (no correlation between different values of $n$ can be identified for the case $m=20$ nor in the two cases for the $3d$ setting). No clear correlation between the quality of the greedy strategy and the value of $m$ and $n$ can be identified, except that for the $2d$ {\tt iLHS} samples the absolute values of the subset computed for $n=140$ tend to be worse than those for smaller $n$. This effect, however, cannot be observed in the $2d$ {\tt unif} nor in the $3d$ cases. 

The bounds provided by the $greedy$ strategy are quite stable for varying $n$ and fixed $m$, but are significantly worse than bounds provided by the $random$ strategy in $2d$. In $3d$, however, this is not the case. Here, the results of the $greedy$ strategy are better than those of the random subset sampling; the average (median, max) advantage of the greedy strategy over the random one is 20\% (30\%, 38\%) for {\tt iLHS} and 16\% (18\%, 36\%) for {\tt unif}. In some of the $3d$ cases, the best value returned by the greedy strategy is either optimal (this is the  case for the $m=60$, $n=80$, {\tt iLHS} setting) or could not be improved by the exact solvers ($m=80$, $n=100$, {\tt iLHS}; $m=60$, $n=80$, {\tt unif}; and $m=80$, $n=100$, {\tt unif})  or it is quite close to optimal (e.g., $m=80$, $n=100$, {\tt iLHS} with a 0.5\% overhead compared to the best value returned by the exact solvers). Note that the $random$ strategy evaluates much fewer samples in $3d$ than in $2d$, since the star discrepancy computation is substantially more time-consuming in $3d$.  

Thus, summarizing this section, we find that (with few exceptions), both the $random$ subset sampling and the $greedy$ heuristic perform rather poorly on the discrepancy subset selection problem, clearly motivating the need for more sophisticated approaches, either in terms of exact solvers such as the MILP and BB approaches presented in Section~\ref{sec:algos} or in terms of better heuristics.      

\subsection{Comparison between MILPs and Branch-and-Bound}
\label{sec:milp-vs-bb}

\begin{table}[t]
\centering
        {\footnotesize
		\begin{tabular}{l@{  }lr@{ }rr@{ }rr@{ }r}\hline
		\multirow{1}{*}{$m$} & \multirow{1}{*}{sequence}
		            & \multicolumn{2}{c}{$n=40$}  & \multicolumn{2}{c}{$n=60$} 
			    & \multicolumn{2}{c}{$n=80$}\\\toprule
$20$ & \tt Faure 	   &34 &&859   &&  - &\\
     & \tt Sobol' 	   &4  &&973   &&  - &   \\  
     & \tt Halton 	   &30 && -    &&  - &  \\  
     & \tt RevHal    &10  &&1278  &&1378&  \\
     & \tt iLHS        &161 &(9)& 620 & (8)& 567 & (1) \\
     & \tt unif       &31 &(9) & 305 & (8)& 966 & (1)\\\midrule
$40$ & \tt Faure 	   &&  &-&   &-&   \\ 
     & \tt Sobol' 	   &&  &-&   &-&   \\
     & \tt Halton 	   &&  &-&  &-&   \\
     & \tt RevHal   &&  &-&  &-&   \\
     & \tt iLHS        &  & & 253 & (2)& - &  \\
     & \tt unif       &  & &  - & & - &  \\\midrule
$60$ & \tt Faure 	   &&  &&  &-&  \\
     & \tt Sobol' 	   &&  &&  &-&  \\
     & \tt Halton 	   &&  &&  &-&  \\ 
     & \tt RevHal     &&  &&  &-& \\
     & \tt iLHS        &  & &   & & 806 &(2) \\
     & \tt unif       &  & &   & & 86  &(2) \\\bottomrule

	\end{tabular}}
	\caption{CPU-time (in seconds) of 
	BB for low-discrepancy sequences 
	and median CPU-time and number of instances solved
	out of ten (in parenthesis) 
	for randomized constructions
	for several values of $n$ and $m$ 
	in the three-dimensional case, where ``-'' indicates
that the approach did not terminate before the time limit of 1800 seconds.}
\label{tab:CPU-3d}
\end{table}

Tables~\ref{tab:CPU-2d} and~\ref{CPU-2d-lhs} in the appendix present the running times (measured in seconds) of the MILP solver and BB, for different values of $n$ and $m$ in dimension $d=2$ on deterministic sequences and on randomized constructions, respectively. The same information for branch and bound in $d=3$ is shown in Table~\ref{tab:CPU-3d} (for the randomized constructions more details can be found in Table~\ref{CPU-3d-lhs} in the appendix). We do not show the results for the MILP solver, since they have shown poor performance in the $3d$ case. For the randomized sequences, the values reported in Table~\ref{tab:CPU-3d} are the median running times on the instances that were solved within the cut-off time. 
In Tables~\ref{CPU-3d-lhs} and~\ref{CPU-2d-lhs}, we report
the minimum, the median and the maximum value. The number of instances that were solved 
within the time limit is reported in parenthesis. The entry ``-'' indicates that the implementation was not able to terminate within this cut-off time.

For a better comprehension of the information shown in Tables \ref{tab:CPU-2d} and~\ref{CPU-2d-lhs}, Figure~\ref{fig:2Druntimes} plots a summary of these tables.
The left and the right column correspond to the performance of the ILP solver 
on the MILP formulation and of BB, respectively. 
Each row corresponds to the performance obtained for a given value of~$m$.
The points correspond to the running times in seconds obtained on 
deterministic sequences 
(fau is {\tt Faure}, sob is {\tt Sobol} hal is {\tt Halton}, rev is {\tt RevHal}, and fib is {\tt Fibon}) and to the median running times in seconds on 
randomized constructions (lhs is {\tt iLHS} and uni is {\tt unif}). The label for each deterministic sequence is placed close to the point with the largest value of
$n$ for which the approach was able to solve before the time limit of 1800 seconds was achieved. In the case of randomized constructions, 
the label is placed close to the point with the largest value of $n$ for which the approach was 
able to solve at least one instance before the time limit. 
The barplots present the percentage of solved instances of randomized 
constructions, where violet and blue correspond to {\tt iLHS} and {\tt unif}, respectively.

The results for $d=2$ in Tables \ref{tab:CPU-2d} and~\ref{CPU-2d-lhs}
suggest two different patterns
for MILP and for BB: while the latter is faster to find the optimal subset for small
$m/n$ ratios (see row $m=20$ in Table~\ref{tab:CPU-2d} and second column and
first row in Figure~\ref{fig:2Druntimes}), MILP is faster for
larger $m/n$ ratios (see main diagonal in Table~\ref{tab:CPU-2d} and the
leftmost running-times in the left column of Figure~\ref{fig:2Druntimes}). The
difference between the two methods is striking at both ends. BB solves 
all instances with $n=140$ and $m=20$ in almost less than 100 seconds whereas MILP
cannot even solve a single one. MILP can solve all instances for $n=140$ and
$m=120$, except for {\tt Fibon} sequences, whereas BB can only solve {\tt
RevHal}, {\tt Fibon}, and almost half of the {\tt iLHS} instances. 

The strong performance of MILP as compared with BB when the $m/n$ ratio is close to 1 is related to the quality of the lower bounds on those cases. Table
\ref{tab:2dLP} in the appendix shows the integrality gap of the LP relaxation of MILP for
$2d$ deterministic sequences with respect to the optimal found and,
when not available, with respect to the best solution found. We can observe
that the smallest gaps arise for larger $m/n$ ratios. We recall that the solution
of the LP relaxation of MILP is integral when $m=n$. 

Unfortunately, MILP is not feasible for the $3d$ case
as the memory requirement grows very fast, reaching the limit
available in the cluster for the smallest instances. This is due to the large number of constraints in 
the $3d$ case. In fact, a 
file with the MILP formulation in LP format occupies several 
gigabytes. For this reason, Tables~\ref{tab:CPU-3d} and~\ref{CPU-3d-lhs} report only the CPU-time taken by BB on deterministic and randomized sequences, 
respectively. The performance decay of BB 
reported in these tables is noticeable in comparison with the $2d$ case. For instance, this approach cannot find a 
single solution for $3d$ instances with $n=100$ and $m=20$, whereas it can solve all $2d$ instances 
for the same values of $m$ and $n$ in at most 250 seconds. This is mainly due to the time taken 
with the update of data structures and the evaluation of the 
lower bounds, which grows considerably with an increasing number of dimensions. 

We also observe from Table~\ref{tab:CPU-2d}  that there seems to be little
difference of performance among the different deterministic
sequences. Still, some outliers are quite noticeable with BB,
such as with {\tt Faure} sequence for $n=100$ and $m=80$ which took 1\,194
seconds, while with {\tt Fibon}
sequence for the same parameters took only 45 seconds. Similarly, 
MILP was not able to solve a {\tt Fibon} sequence for $n=120$ and $m=100$,
took 1\,538 seconds to solve a {\tt Sobol} sequence with the same sizes, 
while it took less than 20 seconds to solve the remaining sequences. 
There is also no large difference between the running times obtained
on deterministic sequences and on {\tt iLHS} sequences. Differently, we
observe that {\tt unif} sequences take less time to be solved with MILP 
while they take more time to be solved with BB.

\begin{table}[t]
\centering
        {\footnotesize
                \begin{tabular}{llrr}\hline
        \multirow{2}{*}{$m$} & \multirow{2}{*}{sequence} &  \multicolumn{2}{c}{$n=140$}\\
             &                  & MILP & BB  \\\hline
        $40$ & \tt Faure            &*0.0449& 0.0449 \\      
             & \tt Sobol'            &*\textcolor{gray}{0.0449}& 0.0447 \\
             & \tt Halton           &*0.0452& 0.0452 \\
             & \tt RevHal        &0.0444 & 0.0444 \\
             & \tt Fibon        & 0.0448  & 0.0448 \\\hline

        $60$ & \tt Faure            & *0.0334 &  0.0334 \\          
             & \tt Sobol'            & *\textcolor{gray}{0.0334} &  0.0328 \\
             & \tt Halton           & *\textcolor{gray}{0.0345} &  0.0338 \\
             & \tt RevHal        & *0.0336 & 0.0336 \\
             & \tt Fibon        &*\textcolor{gray}{0.0338}   & 0.0338 \\\hline

        $80$ & \tt Faure            & *\textcolor{gray}{0.0273} &  0.0271 \\
             & \tt Sobol'            & *0.0273 &  0.0273 \\
             & \tt Halton           &  0.0277 &  0.0277 \\
             & \tt RevHal        & *\textcolor{gray}{0.0279} & *0.0278 \\
             & \tt Fibon        & *0.0296  &*0.0276  \\\hline

        $100$ & \tt Faure           & 0.0241 & 0.0241 \\
              & \tt Sobol'           & 0.0241 &*\textcolor{gray}{0.0246} \\
              & \tt Halton          &*0.0242 &*\textcolor{gray}{0.0297} \\
              & \tt RevHal       & 0.0238 &*0.0238 \\
              & \tt Fibon        & *0.0296  & 0.0230\\\hline

        \end{tabular}}
        \caption{Optimal and best found (*) star discrepancy values for MILP and BB  with a time limit of 23 hours. All data is for the two-dimensional case with $n=140$ and different values of $m$. Provably non-optimal values are printed in \textcolor{gray}{grey color.}}
\label{tab:mm2-vs-bb-d2-n140}
\end{table}

Table~\ref{tab:mm2-vs-bb-d2-n140} compares the final solution quality of the
MILP and BB for $n=140$
and different values of $m$, after a cut-off time of 23 hours. Values marked by
an asterix * could not be proven to be optimal by the solver, and values
printed in grey color are known to be non-optimal, by the result of the other
solver. BB could solve all but four instances, whereas the solver for
MILP did not finish on eleven instances. The solution quality, however, is
nevertheless decent: only two values deviate from the best solution found by
BB by more than $1\%$; these are for $m=60$ and {\tt Sobol} (1.8\%
worse than the optimal solution) and for {\tt Halton} (2.1\% worse). 
Only for the case of $m=100$ for {\tt Sobol} sequence 
(+2.1\% compared to the optimal solution) and
for {\tt Halton} (+22.7\% compared to the best solution found by MILP) are the values obtained by BB worse than those obtained by MILP.

\begin{figure}
    \centering
    \includegraphics[width=0.97\textwidth]{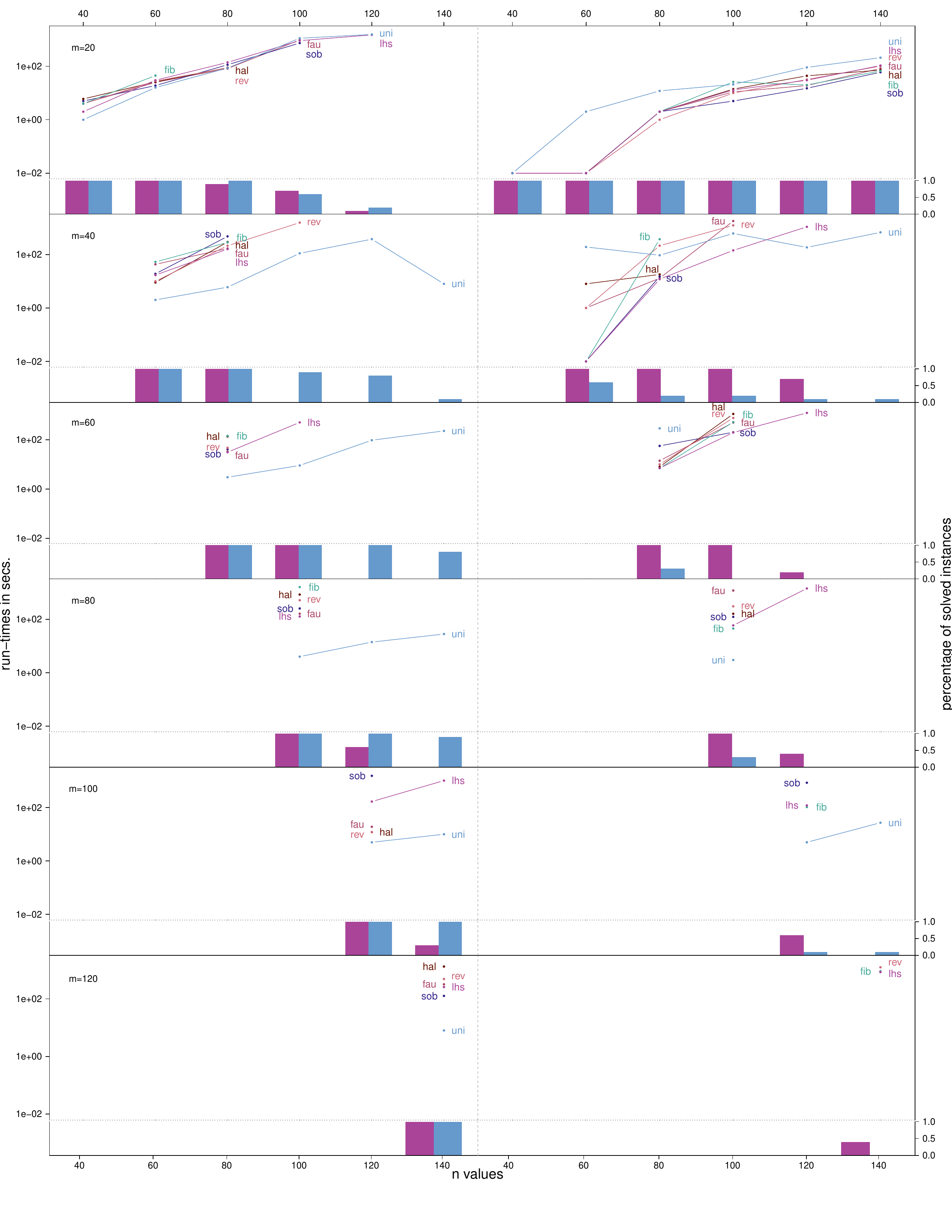}
    \caption{Run-times for deterministic sequences, median run-times for randomized sequences (points and lines), and percentage of solved instances for randomized sequences (barplots) 
    for MILP (left column) and BB (right column) and for each combination of $m$ (rows) and $n$.}
    \label{fig:2Druntimes}
\end{figure}

\section{Comparison of Star Discrepancy Values}
\label{sec:comparison-values}\label{sec:comp-discrepancy}

While we have focused in Section~\ref{sec:comp-algos} on the comparison between the different solvers, we now discuss the quality of the subsets for the different point constructions. Detailed values and information about the convergence of the exact solvers can be found in Tables~\ref{tab:2dsummary}, \ref{tab:2dsummary_lhs}, and~\ref{tab:2dsummary_unif} for low-discrepancy, {\tt iLHS}, and {\tt unif} samples in $2d$ and in Tables~\ref{tab:3dsummary}, \ref{tab:3dsummary_lhs}, and~\ref{tab:3dsummary_unif} for low-discrepancy, {\tt iLHS}, and {\tt unif} samples in $3d$, respectively. 

\begin{figure}[t]
    \centering
    \includegraphics[width=\textwidth]{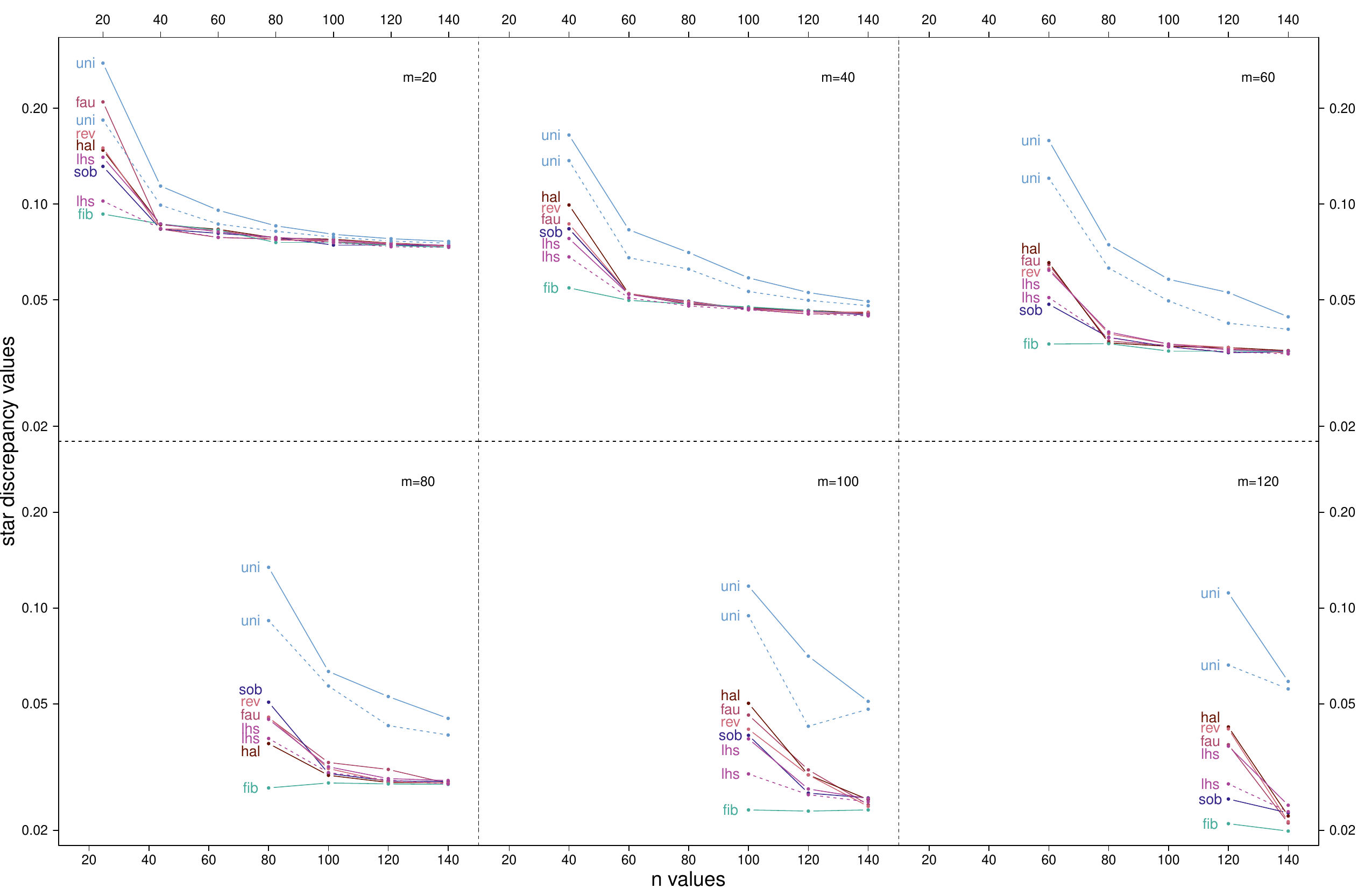}
    \caption{Star discrepancy values for each tested combination of
    $m$ and $n$ in $2d$. For the two randomized constructions {\tt iLHS} and {\tt unif}, minimum (dashed lines) and median (solid lines) values across the ten independent runs are shown.}
\label{fig:comp-disc}
\end{figure}

\subsection{The Two-Dimensional Case} 
Figure~\ref{fig:comp-disc} visualizes the star discrepancy values of the optimal (or best found, see Tables~\ref{tab:2dsummary}, \ref{tab:2dsummary_lhs}, and~\ref{tab:2dsummary_unif} for details) subsets for all tested combinations of $n$ and $m$ in $2d$. 

\begin{figure}
    \centering  
    \includegraphics[width=\linewidth, trim=1.7cm 12.5cm 2cm 2cm, clip]{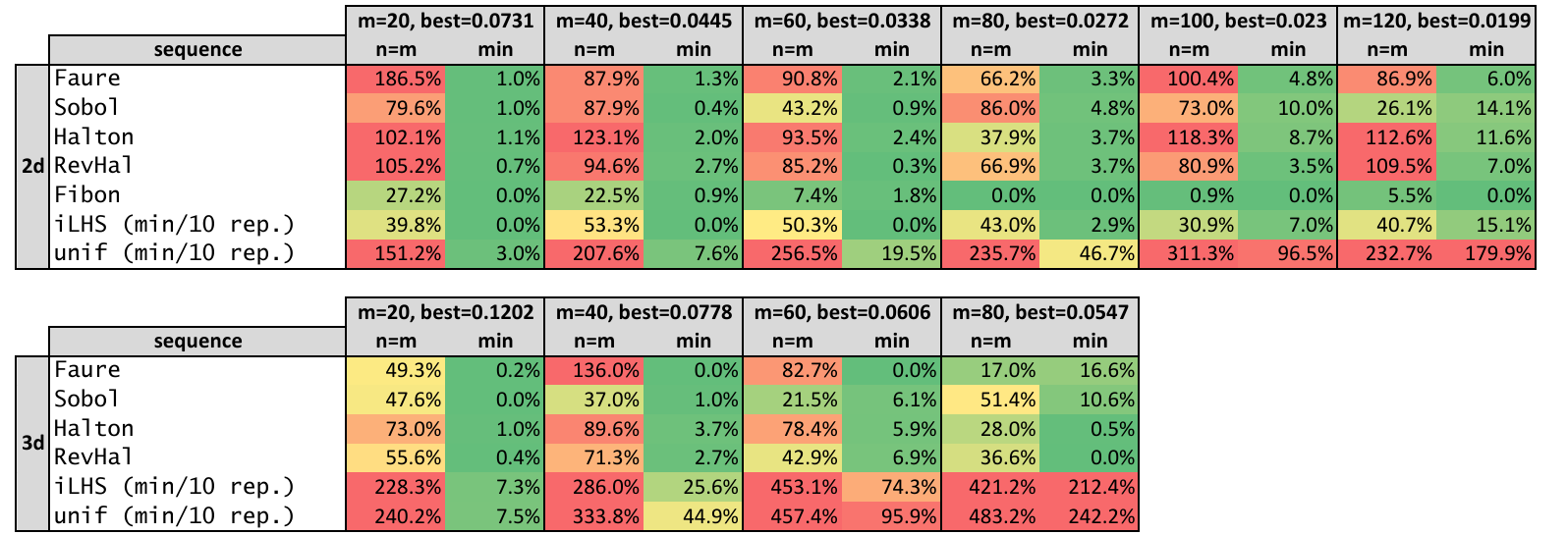}
    \caption{
    Relative disadvantage of the discrepancy values of the original point sets (column ``$m=n$'') and of the best size-$m$ subset (across all tested sets with $n=m+20i$ points, column ``min''), compared against the best overall set with $m$ points. For the two random constructions,{\tt iLHS} and {\tt unif}, we report the best out of the ten independent experiments.}
    \label{fig:overhead}
\end{figure}

\emph{Dependency on $m$.} 
As expected, the discrepancy values decrease with increasing $m$. While the discrepancy of the best original construction with $m$ points decreases from 
0.093 to 0.0545, 0.0363, 0.0272, 0.0232, and 0.021 for $m=20$, $40$, \ldots, $120$ points, the discrepancy value of the best found size-$m$ subset (over all $n>m$ studied) decreases from 0.0731 for $m=20$ to 0.0445 for $m=40$, 0.0338 for $m=60$, 0.0272 for $m=80$, 0.023 for $m=100$, and 0.0199 for $m=120$. The advantage of the subset selection is therefore around $21\%$ for $m=20$, 18\% for $m=40$, 7\% for $m=60$,	0\% for $m=80$, 1\% for $m=100$, and 5\% for $m=120$. 

\emph{Dependency on $n$.} For fixed $m$, the values tend to decrease with increasing $n$, but there are a few cases that do not follow this rule, i.e., in which the optimal $m$ point subset of a $n=m+20i$ set has greater discrepancy value than the set with $n=m+20(i-1)$ points. Cases with $n=140$ may be caused by non-convergence of the exact solvers, i.e., the reported bounds may simply not reflect the value of an optimal subset. Examples for this setting are {\tt Faure} with $m=40$, {\tt Sobol} with $m=60$, {\tt Fibon} with $m=100$. However, there are also cases in which the increase in discrepancy value is not caused by this artifact, but by a real disadvantage of the larger $n$-point set. This is the case for the {\tt Fibon} sequence with $m=60$, where the discrepancy of the optimal subset of the $n=80$ construction is $0.0364$, slightly larger than the $0.0363$ discrepancy of the original $m=60$ construction. It is also the case for the {\tt Fibon} sequence with $m=80$, which has a discrepancy value of $0.0272$ for the original ($n=m$) construction, whereas the optimal subset of the $n=100$ point set has discrepancy $0.0282$ (and also the best found subset for the $n=120$ construction is worse than the original 80-point one, but the solvers did not converge, so that we do not know whether the disadvantage is real). Another example of a non-monotonic behavior is the {\tt unif} construction with $m=100$, but here the decrease in the discrepancy value of the best subset is simply caused by the random nature of the construction, and the comparatively large variance between the independently sampled $n$-point sets, see Figure~\ref{fig:cbox} for an illustration. 

We also observe a general trend for diminishing returns for increasing $n$, i.e., the relative gain when increasing $n$ from $m$ to $m+20$ is larger than the gain when increasing $n$ from $m+20i$ to $m+20(i+1)$ for~$i>0$. 

\begin{figure}[t]
    \centering
    \includegraphics[width=\textwidth]{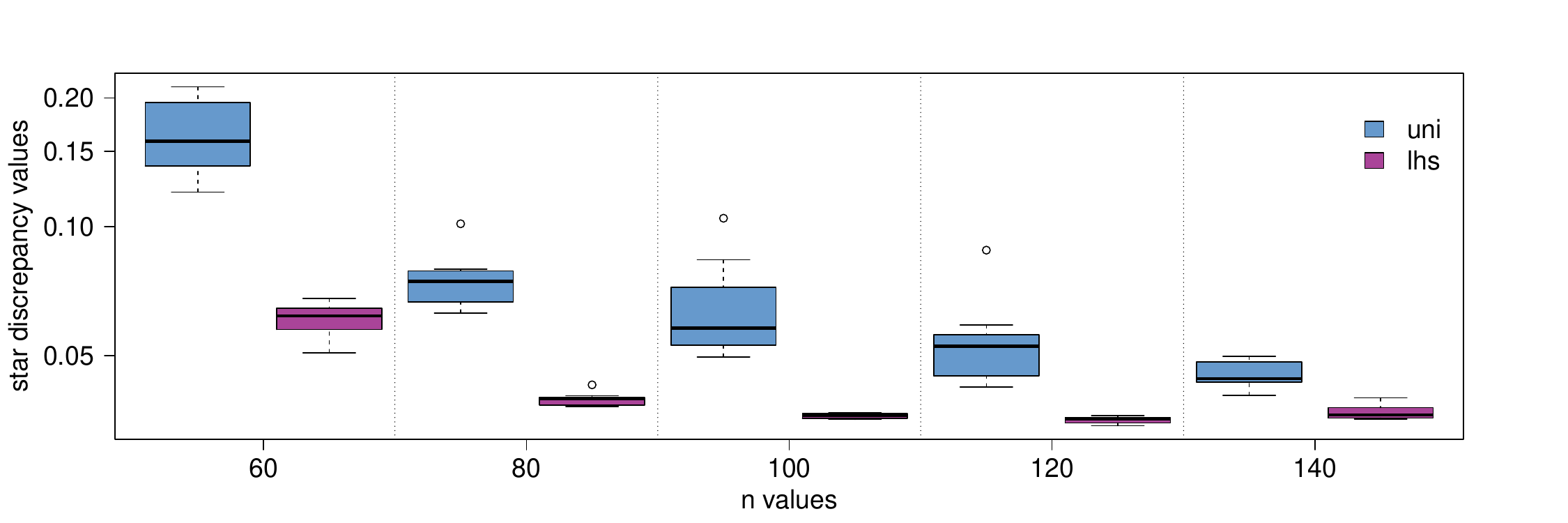}
    \caption{Boxplot of the star discrepancy values of 
    the best sets found for {\tt iLHS} and {\tt unif} in
    $2d$ for $m=60$ and for several
    values of $n$.}
\label{fig:cbox}
\end{figure}

\emph{Comparison of the different constructions.} 
The by far worst discrepancy values are obtained by the uniformly sampled point sets {\tt unif}, and this even when considering the best of all ten independent runs (dashed line in Figure~\ref{fig:comp-disc}). For the original $m$-point constructions, i.e., the case $n=m$, the {\tt Fibon} sets are clearly the best, with discrepancy values that are significantly smaller than that of all other constructions. However, we also see that the advantage of this sequence diminishes or even vanishes when considering the best size-$m$ subsets that could be identified for $n>m$. Indeed, we observe that the discrepancy values of the $n=m$ point sets can differ quite substantially between the different constructions, whereas their values are quite similar for the best (found) size-$m$ subsets out of the $n=140$ constructions.\footnote{We note that in Table~\ref{tab:2dsummary} the discrepancy of the best found size $m=100$ subset of the first $n=120$ elements of the {\tt Fibon} sequence is smaller than that of the best size $m=100$ subset found for the first $n=140$ elements. Since the {\tt Fibon} sequence is a proper \emph{sequence,} the first $n=120$ points are contained in the $n=140$ points. This result is hence only possible because our algorithm did not compute the optimal subset for the case $n=140$, as indicated in the table by the *.}   
To analyze these values in more detail, we report in Figure~\ref{fig:overhead} the smallest discrepancy value $d^*_{\infty}(P^*_m)$ found for any of the size-$m$ point sets (top row, value reported as ``best=''). We then report in Figure~\ref{fig:overhead} the relative disadvantage $(d^*_{\infty}(P)-d^*_{\infty}(P^*_m))/d^*_{\infty}(P^*_m)$ of the discrepancy value of the original $m$-point constructions (columns ``$n=m$'') and of the best found size-$m$ subsets (column ``min'') against these best discrepancy values. The disadvantage of the original size-$m$ constructions against the best found point set are quite significant for almost all constructions, with the exception of the {\tt Fibon} sequence, for which the advantage of the subset selection approach varies only between 0\% (for $m=80$) and 27.2\% (for $m=20$). For all other constructions, we see a substantial advantage of the subset selection approach. 

Taking the uniformly sampled point sets aside, the differences between the best size-$m$ subsets are at most 1.1\% for the case $m=20$, at most $2.7\%$ for the case $m=40$, etc. These values increase with increasing $m$, but the plots in Figure~\ref{fig:comp-disc} suggest that a further increase in $n$ could reduce these differences. While the convergence itself may not be very surprising, it is interesting to see that a relatively small increase in $n$ can suffice to find small discrepancy subsets in any of the low-discrepancy construction and in the {\tt iLHS} sets. For uniformly sampled points, larger sample size $n$ seems to be needed to achieve similarly small discrepancy values.
 

For the random point sets, the differences between the discrepancy values of the ten independent {\tt unif} constructions are larger than those of the {\tt iLHS} (sub-)sets, as can be easily seen from the examples plotted in Figure~\ref{fig:cbox} and from the detailed values in Tables~\ref{tab:2dsummary_unif} and~\ref{tab:2dsummary_lhs}.

\subsection{The Three-Dimensional Case} 
Figure~\ref{fig:comp-disc-3d} compares the discrepancy values of the best (found) subsets of size $m$, for all tested super-sets of size $n$. Exact values of the best size-$m$ point set and the relative disadvantages of the six considered constructions are provided on the bottom part of Figure~\ref{fig:overhead}, whereas detailed results are available in Tables~\ref{tab:3dsummary} for the low-discrepancy sequences, \ref{tab:3dsummary_lhs} for {\tt iLHS}, and~\ref{tab:3dsummary_unif} for {\tt unif}, respectively.   

\begin{figure}[t]
    \centering  
    \includegraphics[width=\linewidth] {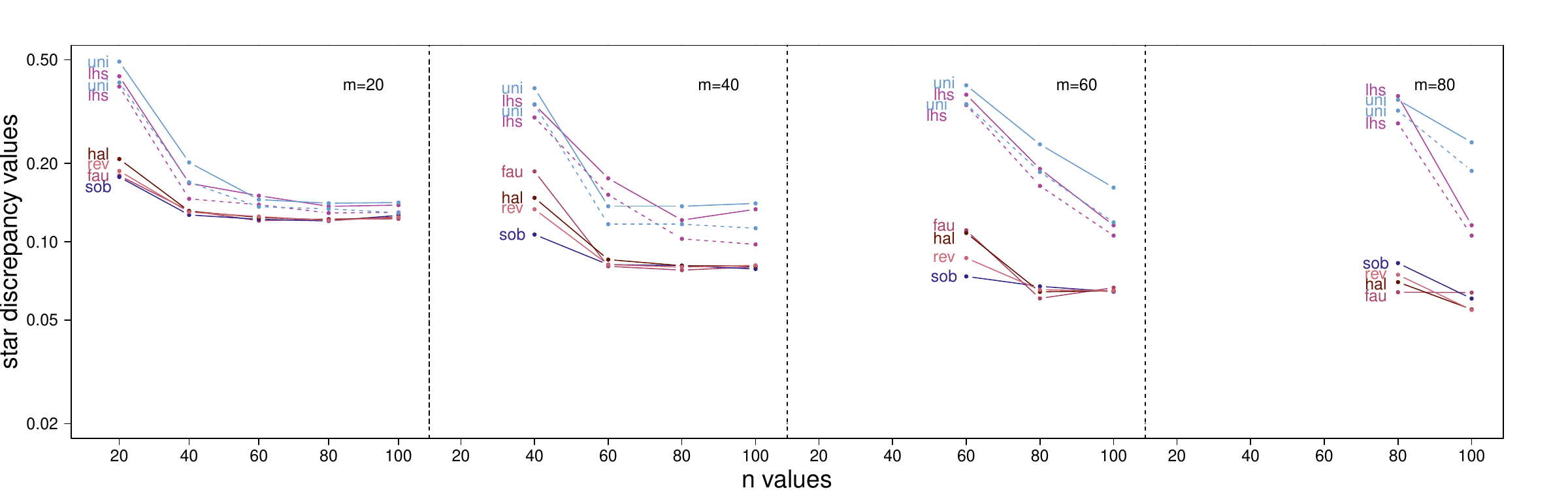}
    \caption{Star discrepancy values for each tested combination of
    $m$ and $n$ in $3d$. For the two randomized constructions {\tt iLHS} and {\tt unif}, minimum (dashed lines) and median (solid lines) values across the ten independent runs are shown.}
\label{fig:comp-disc-3d}
\end{figure}

\emph{Comparison with the $2d$ values.} 
A sequence that clearly stands out in the $2d$ case is the {\tt fibon} sequence. This sequence, however, does not have a straightforward generalization to dimensions $d>2$. It therefore doesn't appear in our $3d$ evaluations. Not surprisingly, the discrepancy values of the $3d$ constructions are much worse than that of the $2d$ constructions for any given $m$.    
For $m=20$, the discrepancy of the best $3d$ set is 64\% larger than that of the best $2d$ set of the same size. This disadvantage monotonically increases with $m$. It is 75\% for $m=40$, 79\% for $m=60$, and 101\% for $m=80$. 

\emph{Dependency on $m$.} As in the $2d$ case, the discrepancy values of the best found size-$m$ point sets decrease with increasing $m$; they are 
0.1202, 0.0778, 0.0606, and 0.0547 for $m=20$, $40$, $60$, and $80$, respectively. That is, the advantage of adding another 20 points decreases with increasing $m$. The discrepancy value of the best original ($n=m$) constructions are 0.1774, 0.1066, 0.0736, 0.064 for $m=20$, $40$, $60$, and $80$, respectively, resulting in a relative advantage of the best size-$m$ subsets over the original ($n=m$) constructions decreasing from 32\% to 27\%, 18\%, and	15\%, respectively.

\emph{Dependency on $n$.} For fixed $m$, the discrepancy decreases with increasing $n$, and this quite significantly already for $n=m+20$, with an average gain of 44\% in discrepancy value for $m=20$ and $m=40$, 36\% for $m=60$, and 26\% for $m=80$. The latter value are based on incomplete data, however, since the algorithms did not converge in the 30 minutes time-out and therefore provided only upper bounds for the discrepancy values of the optimal subset. Based on the same data, the median gain in discrepancy values for $m=20$, $40$, $60$, and $80$ is 34\%, 46\%, 43\%, and 27\%, respectively. 
The values in Figure~\ref{fig:overhead} and the curves in Figure~\ref{fig:comp-disc-3d} show that the advantage is slightly larger when considering the best subset across all tested $n$ values, but -- as in the $2d$ case -- the advantage of increasing $n$ from $m+20i$ to $m+20(i+1)$ decreases rapidly for $i>0$.   

\emph{Comparison of the different constructions.} Comparing the different sequences, we observe a clear disadvantage of the {\tt unif} and the {\tt iLHS} constructions. Even the best found size-$m$ (sub-)sets have a relative overhead of more than 7\% for $m=20$, more than $25\%$ for $m=40$, more than 74\% for $m=60$ and more than $200\%$ for $m=80$. The discrepancy values of the original $m$-point constructions (column $n=m$ in Figure~\ref{fig:overhead}) is above 200\% for all settings, and it is even larger than 400\% for $m=60$ and $m=80$. We recall that ten independently sampled constructions were evaluated, and the values reported in Figure~\ref{fig:overhead} are for the best among these ten trials; the median and average would hence compare even more unfavorably (see Tables~\ref{tab:3dsummary_lhs} and~\ref{tab:3dsummary_unif} for details).    

The differences between the four low-discrepancy sequences are quite small for the best size-20 point set, with less than 1\% difference. For $m=40$, the largest difference between these sequences is 3.7\%. For $m=60$, the {\tt Faure} sequence yields the best subset, and the best subsets of the other sequences are between 5.9\% and 6.9\% worse. For $m=80$, the {\tt RevHal} sequence has the best subset, closely followed by {\tt Halton}. The best {\tt Sobol} and {\tt Faure} subsets are 10.6\% and 16.6\% worse. We suspect that the differences would decrease with increasing $n$, but we could not verify this assumption, since our algorithms did not converge for larger values of $n$. 

Thus, overall, the low-discrepancy sequences show significant advantages over the two randomized constructions, but we do not see any clear ranking of these four tested sequences. For all point sets, the best size-$m$ subsets have significantly smaller discrepancy than the original constructions with $m$ points.  

\section{Conclusions and Future Work}
\label{sec:conclusions}

We have introduced the star discrepancy subset selection problem and we have presented two different exact solvers, one based on mixed-integer linear programming (MILP) and one based on branch and bound (BB). We have compared the performances of these solvers, and contrasted them with that of random subset sampling and a greedy construction. For 
the two-dimensional case, while the MILP solver is efficient for large $m/n$ ratios, BB seems more suitable for small $m/n$ ratios. We relate these findings with the quality of the lower bounds. However, for the three-dimensional case only BB is able to solve
this problem, even for small $n$. 

Comparing the optimal subsets of seven different point constructions, our key findings are that 
(1) the discrepancy of the best size-$m$ subset can be significantly better than the original size-$m$ construction (with the only exception of the {\tt Fibon} sequence with $m\ge 80$) and the main improvement stems from increasing $n$ from $m$ to $m+20$,  
(2) the values of the best found subsets are very similar for all low-discrepancy constructions, regardless of their comparatively large differences in the original $m=n$ constructions; 
(3) {\tt unif} and {\tt iLHS} point sets are not competitive in $3d$ in terms of discrepancy values. 

Given that many computer science applications operate with a fixed budget problem dimension $d$ and a fixed budget $n$ of points that can be evaluated, we consider it valuable to collect point sets of small discrepancy values. Our work shows that the subset selection approach could be an interesting alternative to construct such point sets. However, while the point sets identified in this work are better than the best ones found using the classic approaches, we still need to compare with explicit star discrepancy minimization  (i.e., using classic optimization approaches to identify the points $p_1, \ldots, p_m$ which minimize the star discrepancy) and with other low-discrepancy sequences suggested in the literature, such as the symmetrized Fibonacci sequences suggested in~\cite{BilykTY12}, generalized Halton sequences~\cite{Braaten1979}, etc. 

Another very important next step for our work would be the design of efficient algorithms to address the discrepancy subset selection problem for larger point sets $m>100$ and for dimensions $d>3$. As discussed in Section~\ref{sec:comp-algos}, we do not expect MILP formulations and BB approaches and variants to generalize well. Improving the quality of the upper and lower bounds for BB, or the use of other techniques such as column generation or branch and cut/price, should allow for better performance only on slightly larger instances. Given the computational complexity of the star discrepancy evaluation, we can expect that it is impossible to find algorithms that scale polynomially in $n$ and $d$. Heuristic solutions, tailored to the star discrepancy settings such as the ``snapping'' procedures in~\cite{GnewuchWW12} may therefore be needed. 

\subsection*{Acknowledgments} 
We thank Magnus Wahlstr\"om for providing his implementation of the algorithm from~\cite{discrepancy_algorithm} to evaluate the star  discrepancy. We also thank Michael Gnewuch and Aicke Hinrichs for suggesting to add Fibonacci sequences to our experiments in $2d$, and we thank Gon\c{c}alo Martins for 
performing preliminary experiments on branch and bound~\cite{Goncalo-master}. Last but not least, we thank the reviewers for their constructive feedback and for pointing us to the work of Dwivedi et al. on online thinning~\cite{Thinning1}. 

This work was funded by the  Paris Ile-de-France Region and by national funds through the FCT - Foundation for Science and Technology, I.P. within the scope of the project CISUC -- UID/CEC/00326/2020. 


\newcommand{\etalchar}[1]{$^{#1}$}

\appendix
\section{Computational results}
The tables on the following pages present the details of the observations summarized in the main body of the paper. 

\begin{table}[hb!]
\centering
   \footnotesize
	\begin{tabular}{rr*{2}{r@{ }r@{ }r@{ }r@{  \quad }}}\toprule
                \multirow{2}{*}{$m$}& \multirow{2}{*}{$n$}& 
		\multicolumn{4}{c}{{\tt iLHS} sequences} & \multicolumn{4}{c}{{\tt unif} sequences}\\
        &       &$\min$  & med & $\max$ & $(succ)$ &$\min$  & med & $\max$ & $(succ)$ 
	     \\\midrule

        $20$ & $40$&0   &161  &1742& (9) &1   &31   &464 &(9)\\
             & $60$&86  &620  &1305& (8) &140 &305  &569 &(7) \\
             & $80$&567 &567  &567 & (1) &966 &966  &966 & (1) \\
	     &$100$&-   &-    &-   & (0) &- &-  &- & (0) \\\midrule
        $40$ & $60$&253 &872  &1492& (2) &- & - & - & (0)\\
             & $80$&-   &-    &-   & (0) &-  &-   &-   & (0)\\
             &$100$&-   &-    &-   & (0) &-  &-   &-   & (0)\\\midrule
        $60$ & $80$&92  &806  &1519& (2) &25  &86   &147  & (2)\\
             &$100$&-   &-    &-   & (0) &-  &-   &-  & (0)\\\midrule
	$80$ &$100$&-   &-    &-   & (0) &.  &-   &-  & (0)\\\bottomrule
\end{tabular}
        \caption{CPU-time (minimum, median, maximum and number of successful runs out of 10) 
 	taken by branch and bound, for several values of $n$
and $m$ for {\tt iLHS} and {\tt unif} point sets
in the $3d$ case, where ``-''
indicates that the approach did not terminate before the time limit of 1800
seconds.} %
\label{CPU-3d-lhs}
\end{table}

\hide{
\begin{table}
        {\footnotesize
        \begin{tabular}{rr*{3}{|r@{ }r@{ }r@{ }r@{ }r@{ }r@{ }}|}\hline
                \multirow{2}{*}{$m$}& \multirow{2}{*}{$n$}& 
                \multicolumn{6}{c|}{$m_1$} & \multicolumn{6}{c|}{$m_2$} & \multicolumn{6}{c|}{$bb$} \\     
             &       &$\min$  & $Q_{1}$ & $Q_{2}$ & $Q_{3}$ & $\max$ & $(succ)$ 
             &$\min$  & $Q_{1}$ & $Q_{2}$ & $Q_{3}$ & $\max$ & $(succ)$ 
             &$\min$  & $Q_{1}$ & $Q_{2}$ & $Q_{3}$ & $\max$ & $(succ)$ \\\hline

        $20$ & $40$&31  &42   &54   &62   &102 &(10) 
                   &2   &2    &2    &3    &4   &(10) 
                   &0   &0    &0    &0    &0   &(10)\\  
             & $60$&329 &382  &464  &533  &774 &(10) 
                   &18  &26   &30   &36   &45  &(10) 
                   &0   &0    &0    &0    &0   &(10)\\ 
             & $80$&1792&1792 &1792 &1792 &1792&(1) 
                   &71  &108  &140  &165  &190 & (9) 
                   &1   &2    &2    &3    &5   &(10)\\ 
             &$100$&-   &-    &-    &-    &-   & (0) 
                   &689 &748  &913  &932  &1323& (7) 
                   &8   &9    &13   &15   &231 &(10)\\ 
             &$120$&-   &-    &-    &-    &-   & (0) 
                   &1486&1486 &1486 &1486 &1486& (1) 
                   &15  &26   &30   &42   &49  &(10)\\ 
             &$140$&-   &-    &-    &-    &-   & (0) 
                   &-   &-    &-    &-    &-   & (0) 
                   &75  &95  &107  &128  &185 &(10)\\\hline 
        $40$ & $60$&377 &433 &472  &724  &1438& (10) 
                   &5   &14  &17   &22   &26   & (10) 
                   &0   &0   &0    &1    &3   &(10) \\ 
             & $80$&-   &-   &-    &-    &-   & (0) 
                   &101 &125 &160  &205  &247 & (10) 
                   &5   &9   &12   &14   &19  &(10) \\ 
             &$100$&-   &-    &-    &-    &-   & (0) 
                   &-   &-    &-    &-    &-   & (0) 
                   &64  &118 &143  &156  &265 &(10) \\ 
             &$120$&-   &-    &-    &-    &-   & (0) 
                   &-   &-    &-    &-    &-   & (0) 
                   &413 &787 &1089 &1280 &1488& (7) \\ 
             &$140$&-   &-    &-    &-    &-   & (0) 
                   &-   &-    &-    &-    &-   & (0) 
                   &-   &-   &-    &-    &-    & (0) \\\hline   
        $60$ & $80$&-   &-    &-    &-    &-   & (0) 
                   &15  &25   &31   &47   &62  & (10) 
                   &3   &6    &7    &14   &104   &(10) \\ 
             &$100$&-   &-    &-    &-    &-   & (0) 
                   &355 &398  &501  &913  &1707& (10) 
                   &110 &153 &195  &301   &643   &(10) \\ 
             &$120$&-   &-    &-    &-    &-   & (0) 
                   &-   &-    &-    &-    &-   & (0) 
                   &1030&1124&1220 &1316  &1411  &(2) \\ 
             &$140$&-   &-    &-    &-    &-   & (0) 
                   &-   &-    &-    &-    &-   & (0) 
                   &-   &-    &-    &-     &-     &(0) \\ \hline
        $80$ &$100$&-   &-    &-    &-    &-   & (0) 
                   &75  &110  &127  &169  &273 & (10) 
                   &22  &38   &59   &80   &652 &(10) \\ 
             &$120$&-   &-    &-    &-    &-   & (0) 
                   &292 &741  &881  &1060 &1459& (6) 
                   &759 &1108&1423 &1659  &1774   &(4) \\ 
             &$140$&-   &-    &-    &-    &-   & (0) 
                   &-   &-    &-    &-    &-   & (0) 
                   &-   &-    &-    &-    &-    & (0) \\\hline  
       $100$ &$120$&-   &-    &-    &-    &-   & (0) 
                   &23  &147  &167  &275  &324   & (10) 
                   &24  &68  &120  &207  &365   &(6) \\ 
             &$140$&-   &-    &-    &-    &-   & (0) 
                   &565 &791  &1017 &1360 &1703   & (3) 
                   &-   &-   &-    &-    &-     &(0) \\ \hline 
        $120$ &$140$&-  &-   &-    &-    &-     & (0) 
                   &35  &154 &259  &300  &528   & (10) 
                   &796 &801 &853  &956  &1114   &(4) \\ \hline 
\end{tabular}}
	\caption{CPU-time (minimum, median, maximum)  and number of successful
runs (out of 10) of an ILP solver on MILP1 ($m_1$) and MILP ($m_2$) and of branch
and bound ($bb$) for several values of $n$ and $m$ of randomized Improved Latin
Square and uniform sequences in the two dimensional case, where ``-'' indicates
that the approach did not terminate before the time limit of 1800 seconds.} %

\label{CPU-2d-lhs}
\end{table}
}

\begin{table}
\centering
        {\footnotesize
                \begin{tabular}{l@{  }ll*{6}{r}|}\toprule
                        \multirow{1}{*}{$m$} & \multirow{1}{*}{sequence} 
                          & \multicolumn{1}{c}{$n=40$}  & \multicolumn{1}{c}{$n=60$} 
                            & \multicolumn{1}{c}{$n=80$}  & \multicolumn{1}{c}{$n=100$} 
                            & \multicolumn{1}{c}{$n=120$} & \multicolumn{1}{c}{$n=140$}\\ \midrule

$20$ & {\tt Faure}     &1.3556 &1.6926 &2.5039 &2.4591 &2.4024&4.3098  \\
     & {\tt Sobol}     &1.3556 &2.2080 &2.3290 &2.7040 &3.6873&4.6088  \\
     & {\tt Halton}    &1.2828 &1.5105 &2.8167 &2.7912 &2.7148&3.6817  \\
     & {\tt RevHal}  &1.6044 &1.9179 &2.7750 &2.7750 &2.7111&4.9240  \\
     & {\tt Fibon}     &2.1678 &2.9110 &3.6353 &4.4520 &5.1599&5.8107  \\\midrule
$40$ &  {\tt Faure}     & &1.1263 &1.5811 &1.5055 &1.4535 & *2.6538\\
     & {\tt Sobol}     & &1.4267 &1.4695 &1.7000 &*2.2972& *2.7916\\
     & {\tt Halton}    & &1.0000 &1.7417 &1.7000 &*1.6681& *2.2615\\
     & {\tt RevHal}  & &1.2107 &1.7750 &1.6889 &1.6458 & *3.0579\\
     & {\tt Fibon}     & &1.8513 &2.2650 &2.7402 &3.1938 & *3.5181\\\midrule
$60$ &  {\tt Faure}     & & &1.1971 &1.1570 & *1.1285 &*2.0154\\
     & {\tt Sobol}     & & &1.1304 &1.2942 & *1.6949 &*2.1389  \\
     & {\tt Halton}    & & &1.3167 &1.2833 & *1.2750 &*1.7231  \\
     & {\tt RevHal}  & & &1.3439 &1.3063 & *1.4222 &*2.2702  \\
     & {\tt Fibon}     & & &1.6114 &2.0355 & *2.4153 &*2.7430 \\\midrule
$80$ &  {\tt Faure}     & & & &1.0543 &*1.0016 &1.6392 \\
     & {\tt Sobol}     & & & &1.1004 &*1.4218 &*1.5409 \\
     & {\tt Halton}    & & & &1.0718 &*1.0194 &*1.4077 \\
     & {\tt RevHal}  & & & &1.1250 &*1.0222 &*1.8906 \\
     & {\tt Fibon}     & & & &1.6066 &*1.9283 &*2.2117 \\\midrule
$100$ &  {\tt Faure}    & & & & &1.0000 &1.4063 \\
      & {\tt Sobol}    & & & & &1.2469 &*1.2034 \\
      & {\tt Halton}   & & & & &1.0000 &*1.2434 \\
      & {\tt RevHal} & & & & &1.0000 &*1.5956\\
      & {\tt Fibon}    & & & & &1.6097 &*1.8532 \\\midrule
$120$ &  {\tt Faure}    & & & & & &1.2055 \\
      & {\tt Sobol}    & & & & & &1.0000 \\
      & {\tt Halton}   & & & & & &1.1077 \\
      & {\tt RevHal} & & & & & &1.3565 \\
      & {\tt Fibon}    & & & & & &1.5534\\\bottomrule
        \end{tabular}}
        \caption{Integrality gap of the LP relaxation of MILP for two-dimensional
        deterministic sequences with respect to the optimal found and, when 
        not available, with respect to the best solution found (marked with ``*'').}
\label{tab:2dLP}
\end{table}

\begin{table}[b]
\centering
        {\footnotesize
                \begin{tabular}{l@{  }l*{6}{r@{ }r@{ \quad }}}\hline
                \multirow{2}{*}{$m$} & \multirow{2}{*}{sequence}
                            & \multicolumn{2}{c}{$n=40$}  & \multicolumn{2}{c}{$n=60$} 
                            & \multicolumn{2}{c}{$n=80$}  & \multicolumn{2}{c}{$n=100$} 
                            & \multicolumn{2}{c}{$n=120$} & \multicolumn{2}{c}{$n=140$}\\
                     &      & MILP & BB & MILP & BB & MILP & BB & MILP & BB & MILP & BB & MILP & BB \\\toprule
$20$ & {\tt Faure}          &4&0  &25 &0 &82  &2  &967 &  11 &-   & 19  &-   &83  \\
   & {\tt Sobol}            &5&0  &19 &0 &114 &2  &740 &   5 &-   & 15  &-   &60  \\  
   & {\tt Halton}           &6&0  &26 &0 &89  &2  &-   &  14 &-   & 44  &-   &72  \\ 
   & {\tt RevHal}           &4&0  &28 &0 &87  &1  &-   &  10 &-   & 32  &-   &102  \\
   & {\tt Fibon}            &4&0  &45 &0 &-   &2  &-   &  26 &-   &20   &-   &67\\\midrule 

$40$ & {\tt Faure}       &   &   &43 &1 &169 &24 &-   & 215 &-   & 1795&-   &- \\
   & {\tt Sobol}         &   &   &19 &0 &479 &45 &-   & 374 &-   & -   &-   &- \\  
   & {\tt Halton}        &   &   &9  &8 &294 &13 &-   & 216 &-   & -   &-   &- \\ 
   & {\tt RevHal}        &   &   &10 &1 &214 &15 &1560&216  &-   & 1235&-   &- \\
   & {\tt Fibon}         &   &   &53 &0 &284 &18 &-   & 187 &-   & -   &-   &- \\\hline   

$60$ & {\tt Faure}       &   &   &   &      &33 &14 &-& 496 &-&-&-&- \\
   & {\tt Sobol}         &   &   &   &      &41 &56 &-& 200 &-&-&-&- \\  
   & {\tt Halton}        &   &   &   &      &135&8  &-&1106 &-&-&-&- \\ 
   & {\tt RevHal}        &   &   &   &      &47 &10 &-& 761 &-&-&-&- \\
   & {\tt Fibon}         &   &   &   &      &143&7  &-& 518 &-&-&-&- \\\midrule 

$80$ & {\tt Faure}       &   &   &   &      &   &   &161&1194 &-&-&-&- \\
   &  {\tt Sobol}        &   &   &   &      &   &   &254& 123 &-&-&-&- \\  
   & {\tt Halton}        &   &   &   &      &   &   &843& 161 &-&-&-&- \\ 
   & {\tt RevHal}        &   &   &   &      &   &   &517& 305 &-&-&-&- \\
   & {\tt Fibon}         &   &   &   &      &   &   &1608& 45 &-&-&-&-  \\\hline 

$100$ & {\tt Faure}      &   &   &   &      &   &   &    &    &19  &-&-&- \\
   &  {\tt Sobol}        &   &   &   &      &   &   &    &    &1538&847&-&- \\  
   & {\tt Halton}        &   &   &   &      &   &   &    &    &12  & - &-&- \\ 
   & {\tt RevHal}        &   &   &   &      &   &   &    &    &12  & - &-&- \\
   & {\tt Fibon}         &   &   &   &      &   &   &    &    &-   &104&-&- \\\midrule
$120$ & {\tt Faure}      &   &   &   &      &   &   &    &    &    &   &321&-  \\
   &  {\tt Sobol}        &   &   &   &      &   &   &    &    &    &   &127&-  \\  
   &{\tt Halton}         &   &   &   &      &   &   &    &    &    &   &1332&-  \\ 
   & {\tt RevHal}        &   &   &   &      &   &   &    &    &    &   &491 &1253  \\
   & {\tt Fibon}         &   &   &   &      &   &   &    &    &    &   &-&915 \\\bottomrule

        \end{tabular}}
        \caption{CPU-time in seconds obtained by the ILP solver on the MILP formulation and by BB, for several values of $n$
and $m$ for $2d$ low-discrepancy sequences, where ``-'' indicates that the approach did not terminate before the time limit.}
\label{tab:CPU-2d}
\end{table}

\begin{table}
        {\footnotesize
		\begin{tabular}{rr
			r@{ }r@{ }r@{ }r@{  }r@{ }r@{ }r@{ }r@{ \quad }
			r@{ }r@{ }r@{ }r@{  }r@{ }r@{ }r@{ }r@{ }}\toprule
                && 
		\multicolumn{8}{c}{{\tt iLHS} sequences} & \multicolumn{8}{c}{{\tt unif} sequences}\\\midrule
	\multirow{2}{*}{$m$}& \multirow{2}{*}{$n$}& \multicolumn{4}{c}{MILP} & 
	   \multicolumn{4}{c}{BB} & \multicolumn{4}{c}{MILP} & \multicolumn{4}{c}{BB} \\   
        &    &$\min$  & med & $\max$ & $(succ)$ 
             &$\min$  & med & $\max$ & $(succ)$ 
             &$\min$  & med & $\max$ & $(succ)$ 
             &$\min$  & med & $\max$ & $(succ)$ 
	     \\\midrule

         $20$ & $40$&2   &2    &4   &(10) &0   &0    &0   &(10)
                    &0   &1    &1   &(10) &0   &0    &11  &(10)\\
               & $60$&18  &30   &45  &(10) &0   &0    &0   &(10)
                     &1   &16   &41  &(10) &0   &2    &201 &(10)\\
               & $80$&71  &140  &190 & (9) &1   &2    &5   &(10) 
                     &15  &86   &458 &(10) &2   &12   &134 &(10)\\ 
               &$100$&689 &913  &1323& (7) &8   &13   &231 &(10) 
                     &281 &1122 &1375& (6) &4   &21   &230 &(10)\\ 
               &$120$&1486&1486 &1486& (1) &15  &30   &49  &(10) 
                     &1343&1555 &1768& (2) &21  &91   &252 &(10)\\ 
               &$140$&-   &-    &-   & (0) &75  &107  &185 &(10) 
                     &-   &-    &-   & (0) &35  &210  &348 &(10)\\\hline 
 $40$ & $60$&5   &17   &26  & (10)&0   &0    &3   &(10)  
                 &0   &2    &4   & (10)&2   &192  &922 & (6)\\
             & $80$&101 &160  &247 & (10)&5   &12   &19  &(10)  
                   &3   &6    &20  & (10)&81  &94   &107 & (2)\\ 
             &$100$&-   &-    &-   & (0) &64  &143  &265 &(10)  
                   &11  &111  &1045& (9) &3   &615  &1227& (2)\\ 
             &$120$&-   &-    &-   & (0) &413 &1089 &1488& (7)  
                   &7   &376  &1424& (8) &185 &185  &185 & (1)\\ 
             &$140$&-   &-    &-   & (0) &-   &-    &-   & (0) 
                   &8   &8    &8   & (1) &678 &678  &678 & (1)\\\hline 

        $60$ & $80$&15  &31   &62  & (10)&3   &7    &104   &(10)  
                   &2   &3    &51  & (10)&0   &287  &1691 & (3)\\ 
             &$100$&355 &501  &1707& (10)&110 &195  &643   &(10)  
                   &3   &9    &66  & (10)&-  &-  &- & (0)\\ 
             &$120$&-   &-    &-   & (0) &1030&1220 &1411  &(2)  
                   &4   &95   &500 & (10)&-  &-  &- & (0)\\ 
             &$140$&-   &-    &-   & (0) &-  &-    &-     &(0) 
                   &76  &227  &1349& (8) &-  &-  &- & (0)\\\hline 
        
	$80$ &$100$&75  &127  &273 & (10)&22  &59   &652 &(10)  
                   &3   &4    &7   & (10)&0   &3    &4   & (3)\\ 
             &$120$&292 &881  &1459& (6) &759 &1423 &1774&(4) 
                   &4   &14   &233 & (10)&-   &-    &-   & (0)\\ 
             &$140$&-   &-    &-   & (0) &-   &-    &-   & (0)   
                   &7   &28   &191 & (9) &-   &-    &-   & (0)\\ \hline

       $100$ &$120$&23  &167  &324 & (10) &24  &120  &365&(6)  
                   &3   &5    &18  & (10) &5   &5    &5   &(1)\\
             &$140$&565 &1017 &1703& (3)  &-   &-    &-   &(0)  
                   &6   &10   &18  & (10)&27   &27   &27  & (1)\\\hline 
        $120$ &$140$&35  &259  &528 & (10) &796 &853  &1114&(4) 
                   &5   &8    &19  & (10)&-   &-    &-   & (0)\\\hline 
\end{tabular}}
        \caption{CPU-time (minimum, median, maximum and number of successful runs out of 10) 
        of the ILP solver on the MILP formulation, and BB, for several values of $n$ and $m$ for {\tt iLHS} and {\tt unif} point sets
in the $2d$ case, where ``-'' indicates
that the approach did not terminate before the time limit of 1800 seconds.}
\label{CPU-2d-lhs}
\end{table}

\begin{landscape}
\begin{table}
	{\scriptsize
		\begin{tabular}{l@{  }ll*{6}{r@{  \ }r@{ \ }r}}\toprule
			\multirow{2}{*}{$m$} & \multirow{2}{*}{seq.} & \multirow{2}{*}{$n=m$}
			  & \multicolumn{3}{c}{$n=40$}  & \multicolumn{3}{c}{$n=60$} 
			    & \multicolumn{3}{c}{$n=80$}  & \multicolumn{3}{c}{$n=100$} 
			    & \multicolumn{3}{c}{$n=120$} & \multicolumn{3}{c}{$n=140$}\\
		         & && $rand.$ & $greed.$ & $subset$ & $rand.$ & $greed.$ & $subset$  
			    & $rand.$ & $greed.$ & $subset$ & $rand.$ & $greed.$ & $subset$  
			    & $rand.$ & $greed.$ & $subset$ & $rand.$ & $greed.$ & $subset$ \\\midrule
$20$ & \tt Faure     &0.2094 &0.0911&0.1169&\underline{0.0834} &0.0994&0.1656&\underline{0.0785}&0.0922&0.1305&0.0776& 0.1036&0.1305&0.0762&0.1018&0.1554&0.0745&0.1038&0.1554&0.0738  \\
     & \tt Sobol'     &0.1313 &0.0938&0.1254&\underline{0.0834}&0.0960&0.1656&0.0809 &0.1000&0.1472&0.0785&0.1014&0.1472&\underline{0.0743}&0.1029&0.1554&0.0743&0.1031&0.1905&0.0738  \\  
     & \tt Halton    &0.1477 &0.0944&0.1681&0.0861&0.1000&0.1463&0.0833&0.0979&0.1537&0.0782&0.0965&0.1537&0.0775&0.1025&0.1315&0.0754&0.1029&0.1315&0.0739  \\ 
     &\tt  RevHal &0.1500 &0.0935&0.1375&0.0836&0.0977&0.1639&0.0829&0.1000&0.1241&\underline{0.0771}&0.1012&0.1421&0.0771&0.1059&0.1335&0.0753&0.1039&0.1481&0.0736  \\ 
     & \tt Fibon     &0.0930 &0.0931&0.1257&0.0866&0.0971&0.1390&0.0828&0.1000&0.1598&0.0790&0.1023&0.1779&0.0757&0.1009&0.1506&\underline{0.0741}&0.1038&0.1449&\underline{ \bf 0.0731}  \\\midrule

$40$ & \tt Faure      &0.0836 &   & &   &0.0590&0.0747&0.0523 &0.0656&0.0832&0.0490 &0.0691&0.0945&\underline{0.0467}&0.0722&0.1154&\underline{0.0451}&0.0714&0.1154&*0.0454 \\
     & \tt Sobol'      &0.0836 &   & &   &0.0613&0.0695&0.0522&0.0666&0.0899&0.0495 &0.0656&0.0899&\underline{0.0467}&0.0697&0.1154&*0.0463&0.0687&0.1117&\underline{\bf*0.0447} \\  
     & \tt Halton     &0.0993 &   & &   &0.0611&0.1162&0.0552&0.0656&0.0972&\underline{0.0484}&0.0696&0.1157&0.0472&0.0729&0.1157&*0.0463&0.0719&0.1157&*0.0454 \\ 
     & \tt RevHal  &0.0866 &   & &   &0.0628&0.0880&0.0523 &0.0667&0.0841&0.0493&0.0699&0.0841&0.0469&0.0703&0.1021&0.0457&0.0730&0.0985&*0.0457 \\   
     & \tt Fibon      &0.0545 &   & &   &0.0583&0.0726&\underline{0.0498}&0.0656&0.1008&0.0485&0.0669&0.0807&0.0475&0.0713&0.0862&0.0463&0.0714&0.0742 & *0.0449  \\\midrule
     
$60$ & \tt Faure      &0.0645 &   & &   &    &   &  &0.0464&0.0705&0.0371 &0.0522&0.0726&0.0359&0.0540&0.0872&*0.0350&0.0564&0.0937&*0.0345 \\
     & \tt Sobol'      &0.0484 &   & &   &    &   &  &0.0472&0.0659&0.0381&0.0510&0.0807&0.0356&0.0540&0.0820&*\underline{0.0341}&0.0562&0.0917&*0.0343 \\  
     & \tt Halton     &0.0654 &   & &   &    &   &  &0.0453&0.0644 &0.0366&0.0516&0.0625&0.0357&0.0539&0.0604&*0.0354&0.0561&0.0609&*0.0346 \\ 
     & \tt RevHal  &0.0626 &   & &   &    &   &  &0.0468&0.0646&0.0391 &0.0500&0.0583&0.0363&0.0530&0.0667&*0.0354&0.0561&0.0661&*\underline{\bf 0.0339} \\   
     & \tt Fibon      &0.0363& & &   & & &   &0.0436&0.0826&\underline{0.0364}&0.0498&0.0925&\underline{0.0345}&0.0531&0.0866&*0.0345&0.0537&0.0720&*0.0344\\\midrule 

$80$ & \tt Faure      &0.0452 &   & &   &    &   &  & &    &   &0.0397&0.0472&0.0327 &0.0432&0.0468&*0.0311&0.0448&0.0433&0.0281 \\
     & \tt Sobol'      &0.0506 &   & &   &    &   &  & &    &   &0.0398&0.0432&0.0302 &0.0424&0.0439&*0.0286&0.0461&0.0546&*0.0285 \\  
     & \tt Halton     &0.0375 &   & &   &    &   &  & &    &   &0.0387&0.0500&0.0298&0.0424&0.0454&*0.0283&0.0435&0.0454&*0.0282 \\ 
     & \tt RevHal  &0.0454 &   & &   &    &   &  & &    &   &0.0387&0.0426&0.0313 &0.0430&0.0556&*0.0284&0.0439&0.0521&*0.0282 \\
     & \tt Fibon     &{\bf 0.0272} & & &   & & &   & & &  &0.0349&0.0674 &\underline{0.0282} &0.0395&0.0683&\underline{*0.0280}   &0.0432&0.0551 &\underline{*0.0279} \\\midrule

$100$ & \tt Faure     &0.0461 &   & &   &    &   &  & &    &   & &    &     &0.0340&0.0471&0.0310&0.0348&0.0386&0.0241 \\
      & \tt Sobol'     &0.0398 &   & &   &    &   &  & &    &   & &    &     &0.0323&0.0471&0.0262&0.0365&0.0471&*0.0253\\  
      & \tt Halton    &0.0502 &   & &   &    &   &  & &    &   & &    &     &0.0329&0.0432&0.0299&0.0364&0.0532&*0.0250 \\ 
      & \tt RevHal &0.0416 &   & &   &    &   &  & &    &   & &    &     &0.0330&0.0488&0.0299&0.0334&0.0566&*0.0238 \\   
     & \tt Fibon     &0.0232& & &   & & &   & & &   & & &  &0.0298 &0.0463 &\underline{\bf 0.0230}   &0.0349 &0.0531 &\underline{*0.0232}  \\\midrule

$120$ & \tt Faure     &0.0372 &   & &   &    &   &  & &    &   & &    &     & &    &     &0.0273&0.0332&0.0211  \\
      & \tt Sobol'     &0.0251 &   & &   &    &   &  & &    &   & &    &     & &    &     &0.0277&0.0329&0.0227  \\  
      & \tt Halton    &0.0423 &   & &   &    &   &  & &    &   & &    &     & &    &     &0.0292&0.0323&0.0222  \\ 
      & \tt RevHal &0.0417 &   & &   &    &   &  & &    &   & &    &     & &    &     &0.0279&0.0298&0.0213  \\  
     & \tt Fibon      &0.0210 & & &   & & &   & & &   & & &   & & &   &0.0254 &0.0379 &\underline{ \bf 0.0199}  \\\bottomrule

	\end{tabular}}
	\caption{$2d$, low-discrepancy sequences: best found star discrepancy values found by $random$ subset sampling, by the $greedy$ heuristic, compared to the optimal or the best found (marked with *) values returned by MILP or BB (column $subset$), for all tested combinations of $n$ and $m$. 
	Best star discrepancy values for each ($n$,$m$) combination are \underline{underlined}, and the minimum for each $m$ is highlighted in \textbf{boldface}.}
\label{tab:2dsummary}
\end{table}

\begin{table}
        {\footnotesize
		\begin{tabular}{rrrrr*{2}{rrr}r@{ }lrrr}\toprule
                \multirow{2}{*}{$m$}& 
                \multicolumn{3}{c}{$m = n$} & \multirow{2}{*}{$n$}& 
                \multicolumn{3}{c}{$random$} & \multicolumn{3}{c}{$greedy$} & 
                \multicolumn{5}{c}{$subset$} \\           
                     &$\min$  & med & $\max$ & 
                     &$\min$  & med & $\max$ 
                     &$\min$  & med & $\max$   
		     &$\min$ & & med & $\max$ & $(succ)$\\\midrule
		     20   &0.1022&0.1403 &0.1714&  
                    40 &0.0892& 0.0936& 0.0956& 0.1201& 0.1406& 0.1647 
		       &0.0838& ($M$,$B$) & 0.0866& 0.0894& (10)\\
              &&&&  60 &0.0953& 0.0969& 0.0992& 0.1186& 0.1364& 0.1580
	               &0.0786& ($M$,$B$) & 0.0817& 0.0825& (10)\\
              &&&&  80 &0.0942& 0.0993& 0.1009& 0.1226& 0.1465& 0.2022
	               &0.0774& ($M$,$B$) & 0.0785& 0.0800& (10)\\
              &&&& 100 &0.0970& 0.1022& 0.1035& 0.1256& 0.1493& 0.1897
	      	       &0.0756& ($B$) & 0.0770& 0.0778& (10) \\
              &&&& 120 &0.0993& 0.1042& 0.1071& 0.1146& 0.1471& 0.1952
	      	       &0.0734& ($B$) & 0.0748& 0.0762&  (10)\\
              &&&& 140 &0.1001& 0.1034& 0.1070& 0.1293& 0.1394& 0.1876
	      	       &\bf 0.0731 & ($B$) & 0.0742& 0.0747& (10)\\\midrule
         40   &0.0682&0.0779&0.1004&  
                    60 &0.0588& 0.0621& 0.0640& 0.0763& 0.0889& 0.0994
                       &0.0507& ($M$,$B$) & 0.0520& 0.0556&  (10)\\
              &&&&  80 &0.0601& 0.0658& 0.0676& 0.0693& 0.0900& 0.1284
                       &0.0478& ($M$,$B$) & 0.0486& 0.0496&  (10)\\
              &&&& 100 &0.0663& 0.0683& 0.0711& 0.0793& 0.0975& 0.1264
                       &0.0465& ($M$,$B$) &0.0470& 0.0477& (10)\\
              &&&& 120 &0.0696& 0.0708& 0.0724& 0.0743& 0.0905& 0.1209
                       &0.0452& ($B$) &0.0460& 0.0465&  (7)\\
              &&&& 140 &0.0719& 0.0725& 0.0743& 0.0864& 0.1041& 0.1323
                       &\bf *0.0445& ($B$) &0.0450& 0.0539&  (1)\\\midrule
         60   &0.0508&0.0619&0.0680&  
                   80 &0.0450& 0.0466& 0.0481& 0.0562& 0.0665& 0.0882
                       &0.0380&($M$,$B$) & 0.0396& 0.0427& (10)\\
              &&&& 100 &0.0498& 0.0504& 0.0527& 0.0606& 0.0736& 0.0919
                       &0.0356&($M$,$B$) & 0.0363& 0.0368&  (10)\\
              &&&& 120 &0.0529& 0.0541& 0.0562& 0.0555& 0.0795& 0.0974
                       &*0.0343&($M$,$B$) & 0.0348& 0.0354&   (2)\\
              &&&& 140 &0.0543& 0.0565& 0.0581& 0.0673& 0.0760& 0.1053
                       &\bf *0.0338&($B$) & 0.0345& 0.0350&  (0)\\\midrule
         80   &0.0389&0.0447&0.0567& 
                   100 &0.0376& 0.0386& 0.0391& 0.0460& 0.0530& 0.0616
                       &0.0304&($M$,$B$) & 0.0317& 0.0340& (10)\\
              &&&& 120 &0.0408& 0.0425& 0.0434& 0.0474& 0.0578& 0.0876
                       &0.0288&($M$) & 0.0291& 0.0297&  (7) \\
              &&&& 140 &0.0430& 0.0449& 0.0467& 0.0512& 0.0615& 0.0715
                       &\bf *0.0280&($B$) & 0.0287& 0.0300&  (0)\\\midrule
        100  &0.0301&0.0388&0.04833& 
                   120 &0.0322& 0.0335& 0.0341& 0.0380& 0.0433& 0.0479
                       &0.0259&($M$,$B$) & 0.0270& 0.0290& (10)\\
              &&&& 140 &0.0355& 0.0366& 0.0378& 0.0395& 0.0460& 0.0604
                       &\bf 0.0246& ($M$) &0.0252& 0.0257&   (3)\\\midrule
         120  &0.0280&0.0368&0.0436& 
                   140 &0.0279& 0.0294& 0.0304& 0.0314& 0.0383& 0.0462
                       &\bf 0.0229& ($M$,$B$) &0.0240&  0.0270 & (10) \\\bottomrule
 \end{tabular}} \caption{$2d$, {\tt iLHS}:  
 Minimum, median, and maximum 
 of the best
 star discrepancy values found for ten independently generated {\tt iLHS} point sets 
 per each combination of $m$ and $n$ in $d=2$. 
 We show values returned by $random$, by $greedy$, and by the exact strategies (column $subset$). 
 Where none of MILP or BB converged within the given time frame of 1800 seconds, 
 the best found upper bound is shown (marked by a *); the number in parenthesis counts the point sets for which the optimal value could be computed by at least one of the two exact solvers. 
 The best value found for a given instance is the minimum
 obtained by all exact approaches. The minimum value for
 each $m$ is in \textbf{boldface}.  
}
\label{tab:2dsummary_lhs}
\end{table}

\begin{table}
        {\footnotesize
		\begin{tabular}{rrrrr*{2}{rrr}r@{ }lrrr}\toprule
                \multirow{2}{*}{$m$}& 
                \multicolumn{3}{c}{$m = n$} & \multirow{2}{*}{$n$}& 
                \multicolumn{3}{c}{$random$} & \multicolumn{3}{c}{$greedy$} & 
                \multicolumn{5}{c}{$subset$} \\           
                     &$\min$  & med & $\max$ & 
                     &$\min$  & med & $\max$ 
                     &$\min$  & med & $\max$   
		     &$\min$ & & med & $\max$ & $(succ)$\\\midrule 
  20  & 0.1836 & 0.2773 & 0.3450 &
 	  40 & 0.1014 & 0.1143 & 0.1272 & 0.1403 & 0.1621 & 0.1943
	     & 0.0992 & ($M$,$B$) & 0.1139 & 0.1272 & (10)\\
     &&&& 60 & 0.1026 & 0.1081 & 0.1398 & 0.1288 & 0.1546 & 0.2334
     	     & 0.0865 & ($M$,$B$) & 0.0956 & 0.1398 & (10)\\
     &&&& 80 & 0.1001 & 0.1053 & 0.1141 & 0.1405 & 0.1542 & 0.2080
     	     & 0.0821 & ($M$,$B$) & 0.0854 & 0.0925 & (10)\\
     &&&&100 & 0.0999 & 0.1049 & 0.1086 & 0.1232 & 0.1540 & 0.1947
             & 0.0787 & ($M$,$B$) & 0.0803 & 0.0833 & (10)\\
     &&&&120 & 0.1020 & 0.1055 & 0.1114 & 0.1366 & 0.1519 & 0.2259
	     & 0.0765 & ($B$) & 0.0778 & 0.0798 & (10) \\
     &&&&140 & 0.1014 & 0.1083 & 0.1107 & 0.1218 & 0.1472 & 0.1777
     	     & \bf 0.0753 & ($B$) & 0.0763 & 0.0775 &  (10)\\\midrule
  40  & 0.1369 &  0.1648 &  0.2625 &
          60 & 0.0771 & 0.0856 & 0.1077 & 0.0973 & 0.1293 & 0.1685
     	     & 0.0678 & ($M$,$B$) & 0.0830 & 0.1077 &  (10)\\
     &&&& 80 & 0.0757 & 0.0801 & 0.0938 & 0.0931 & 0.1094 & 0.1534
     	     & 0.0624 & ($M$,$B$) &  0.0704 & 0.0938 &  (10)\\
     &&&&100 & 0.0757 & 0.0805 & 0.0846 & 0.0923 & 0.1133 & 0.1765
     	     & 0.0531 &  ($M$,$B$) & 0.0586 & 0.0643 & (9)\\
     &&&&120 & 0.0753 & 0.0791 & 0.0835 & 0.0753 & 0.1048 & 0.1431
     	     & 0.0498 &  ($M$) & 0.0552 & 0.0737 &  (8)\\
     &&&&140 & \bf 0.0749 & 0.0792 & 0.0846 & 0.0836 & 0.1105 & 0.1221  
             & \bf *0.0479 &  ($B$) & 0.0494 & 0.0666 & (2)\\\midrule
 60  & 0.1205 & 0.1583 & 0.2123 &
	  80  & 0.0678 & 0.0750 & 0.1016 & 0.0838 & 0.1011 & 0.1231
 	     & 0.0629 &  ($M$,$B$) & 0.0745 & 0.1016 &  (10)\\
     &&&&100 & 0.0661 & 0.0712 & 0.1047 & 0.0790 & 0.1009 & 0.1474
     	     & 0.0496 &  ($M$) & 0.0580 & 0.1047 &  (10)\\
     &&&&120 & 0.0637 & 0.0683 & 0.0882 & 0.0754 & 0.0901 & 0.1114
     	     & 0.0422 &  ($M$) & 0.0527 & 0.0882 &  (10)\\
     &&&&140 & 0.0623 & 0.0687 & 0.0713 & 0.0849 & 0.0993 & 0.1244
 	     & \bf 0.0404 &  ($M$) & 0.0442 & 0.0498 &  (8)\\\midrule
 80  & 0.0913& 0.1343 & 0.2101 &
         100 & 0.0569 & 0.0651 & 0.0853 & 0.0628 & 0.0825 & 0.1047
     	     & 0.0569 &  ($M$) & 0.0632 & 0.0853 &  (10)\\
     &&&&120 & 0.0565 & 0.0617 & 0.0836 & 0.0696 & 0.0817 & 0.1236
     	     & 0.0427 &  ($M$) & 0.0502 & 0.0787 &  (10)\\
     &&&&140 & 0.0565 & 0.0624 & 0.0645 & 0.0645 & 0.0789 & 0.1124
 	     & \bf 0.0399 &  ($M$) & 0.0450 & 0.0577 &  (9)\\\midrule
 100 & 0.0946 & 0.1172 & 0.1774 &
	 120 & 0.0547 & 0.0706 & 0.1030 & 0.0732 & 0.0954 & 0.1316
 	     & \bf 0.0452 &  ($M$) & 0.0706 & 0.1030 &  (10)\\
     &&&&140 & 0.0546 & 0.0595 & 0.0748 & 0.0701 & 0.0756 & 0.1081
 	     &  0.0481 &  ($M$) & 0.0509 & 0.0748 & (10)\\\midrule
 120 & 0.0662&  0.1116& 0.1511 &
	 140 & \bf 0.0557 & 0.0604 & 0.0692 & 0.0580 & 0.0721 & 0.0890
	     & \bf 0.0557 &  ($M$) & 0.0588 & 0.0692 &  (10)\\\bottomrule
\end{tabular}} \caption{$2d$, {\tt unif}: 
 Minimum, median, and maximum of the best star discrepancy values found for ten independently sampled {\tt unif} point sets 
 per each combination of $m$ and $n$ in $d=2$. 
 We show values returned by $random$, by $greedy$, and by the exact strategies (column $subset$). 
 Where none of MILP or BB converged within the given time frame of 1800 seconds, 
 the best found upper bound is shown (marked by a *); the number in parenthesis counts the point sets for which the optimal value could be computed by at least one of the two exact solvers.  
 The best value found for a given instance is the minimum
 obtained by all exact approaches for any of the ten point sets. The minimum value for
 each $m$ is in \textbf{boldface}.  
 }
\label{tab:2dsummary_unif}
\end{table}

\begin{table}
	{\footnotesize
		\begin{tabular}{l@{  }ll*{4}{r@{  \ }r@{ \ }r}}\toprule
			\multirow{2}{*}{$m$} & \multirow{2}{*}{sequence} & \multirow{2}{*}{$n=m$}
			  & \multicolumn{3}{c}{$n=40$}  & \multicolumn{3}{c}{$n=60$} 
			    & \multicolumn{3}{c}{$n=80$}  & \multicolumn{3}{c}{$n=100$}\\
		         & && $random$ & $greedy$ & $subset$ & $random$ & $greedy$ & $subset$  
			    & $random$ & $greedy$ & $subset$ & $random$ & $greedy$ & $subset$ \\\midrule
$20$ & \tt Faure 	  &0.1795 &  0.1559&0.2206&0.1316 &0.1612&0.2518&\underline{0.1205}  &0.1664&0.3428&*0.1223 &0.1714&0.3193&*0.1225 \\
     & \tt Sobol' 	  &0.1774 &  0.1493&0.1758&\underline{0.1268}&0.1616&0.1817&0.1220  &0.1590&0.2028&*\underline{\bf 0.1202}&0.1692&0.2028&*0.1263 \\
     & \tt Halton 	  &0.2079 &  0.1635&0.1800&0.1311 &0.1664&0.1917&*0.1240 &0.1663&0.1912&*0.1214 &0.1648&0.1990&*0.1244 \\
     & \tt RevHal  &0.1870 &  0.1511&0.1767&0.1300 &0.1509&0.1956&0.1250  &0.1635&0.1633&0.1207 &0.1678&0.1801&*\underline{0.1242} \\\midrule
     
$40$ & \tt Faure 	  &0.1836 &  && &0.1007&0.1239&*\underline{0.0805}&0.1024&0.1654&*\underline{\bf 0.0778} &0.1073&0.1781&*\underline{0.0801} \\
     & \tt Sobol' 	  &0.1066 &  && &0.1039&0.1172&*0.0817 &0.1114&0.1323&*0.0810 &0.1144&0.1311&*0.0786 \\
     & \tt Halton 	  &0.1475 &  && &0.1028&0.1425& 0.0854 &0.1015&0.1464&*0.0809 &0.1083&0.1470&*0.0807 \\
     & \tt RevHal  &0.1333 &  && &0.1091&0.1441& 0.0817 &0.1091&0.1441&*0.0799 &0.1119&0.14371&*0.0812 \\\midrule
     
$60$ & \tt Faure 	  &0.1107 &  && &&& &0.0744&0.0900&*\underline{\bf 0.0606} &0.0857&0.0874&*0.0666 \\
     & \tt Sobol' 	  &0.0736 &  && &&& &0.0834&0.0892&*0.0674 &0.0875&0.0948&*\underline{0.0643}\\
     & \tt Halton 	  &0.1081 &  && &&& &0.0775&0.0917&*0.0642 &0.0842&0.0878&*0.0648 \\
     & \tt RevHal  &0.0866 &  && &&& &0.0778&0.0883&*0.0654 &0.0839&0.0882&*0.0648 \\\midrule

$80$ & \tt Faure 	  &0.0640 &  && &&& &&& &0.0661&0.0846&*0.0638 \\
     & \tt Sobol' 	  &0.0828 &  && &&& &&& &0.0655&0.0637&*0.0605 \\
     & \tt Halton 	  &0.0700 &  && &&& &&& &0.0640&0.0618&*0.0550 \\
     & \tt RevHal  &0.0747 &  && &&& &&& &0.0644&0.0792&*\underline{\bf 0.0547} \\\bottomrule
	\end{tabular}}
	\caption{$3d$, low-discrepancy sequences: best found star discrepancy values found by $random$ subset sampling, by the $greedy$ heuristic, compared to the optimal or the best found (marked with *) values returned by MILP or BB (column $subset$), for all tested combinations of $n$ and $m$. 
	Best star discrepancy values for each ($n$,$m$) combination are \underline{underlined}, and the minimum for each $m$ is highlighted in \textbf{boldface}.
	}
	\label{tab:3dsummary}
\end{table}
\end{landscape}

\begin{landscape}

\begin{table}
    \footnotesize{
		\begin{tabular}{rrrrr*{3}{rrr}r}\toprule
                \multirow{2}{*}{$m$}& 
                \multicolumn{3}{c}{$m = n$} & \multirow{2}{*}{$n$}& 
                \multicolumn{3}{c}{$random$} & \multicolumn{3}{c}{$greedy$} & 
                \multicolumn{4}{c}{$subset$} \\           
                     &$\min$  & med & $\max$ & 
                     &$\min$  & med & $\max$  
                     &$\min$  & med & $\max$   
		     &$\min$  & med & $\max$ & $(succ)$\\\midrule 
		20   & 0.3946 & 0.4325& 0.4721 &  
      40 & 0.2008&  0.2158& 0.2759& 0.1936& 0.2261& 0.2459& 0.1462& 0.1678& 0.2184&(9)\\
&&&&  60 & 0.1911&  0.2184& 0.2421& 0.1917& 0.2132& 0.2654& 0.1388& 0.1503& 0.1933&(7)\\
&&&&  80 & 0.1958&  0.2081& 0.2333& 0.1665& 0.1918& 0.2455& \bf *0.1290& 0.1368& 0.1553&(1)\\
&&&& 100 & 0.1974&  0.2093& 0.2221& 0.1754& 0.1984& 0.2143& *0.1296& 0.1383& 0.1976& (0)\\\midrule
	40 & 0.3003& 0.3364& 0.4545&
      60 & 0.1990&  0.2441& 0.2739& 0.1618& 0.1993& 0.2303& *0.1515& 0.1752& 0.2302& (2)\\
&&&&  80 & 0.1896&  0.2139& 0.2395& 0.1348& 0.1543& 0.1675& *0.1026& 0.1211& 0.1577& (0)\\ 
&&&& 100 & 0.1955&  0.2013& 0.2180& 0.1283& 0.1427& 0.1667& *\bf 0.0977& 0.1333& 0.1600& (0)\\\midrule
	60 & 0.3352& 0.3673& 0.4173& 
      80 & 0.2314& 0.2583& 0.2747& 0.1639& 0.1943& 0.2336& 0.1639& 0.1906& 0.2336&(2) \\ 
&&&& 100 & 0.1945& 0.2211& 0.2537& 0.1061& 0.1192& 0.1547& *\bf 0.1056& 0.1157& 0.1547& (0)\\  \midrule
	80 & 0.2851& 0.3630& 0.3792&
     100 & 0.2087& 0.2640& 0.2795& \bf 0.1709& 0.2143& 0.2305& *\bf 0.1709& 0.2143& 0.2305 & (0)\\\bottomrule

 \end{tabular}} \caption{
 $3d$, {\tt iLHS}: 
 Minimum, median, and maximum of the best star discrepancy values found for ten independently sampled {\tt iLHS} point sets 
 per each combination of $m$ and $n$ in $d=3$. 
 We show values returned by $random$, by $greedy$, and by the exact strategies (column $subset$). 
 Where none of MILP or BB converged within the given time frame of 1800 seconds, 
 the best found upper bound is shown (marked by a *); the number in parenthesis counts the point sets for which the optimal value could be computed by at least one of the two exact solvers. 
 The best value found for a given instance is the minimum
 obtained by all exact approaches for any of the ten point sets. The minimum value for
 each $m$ is in \textbf{boldface}.
 }
\label{tab:3dsummary_lhs}
\end{table}

\begin{table}
\footnotesize{
		\begin{tabular}{rrrrr*{3}{rrr}r}\toprule
                \multirow{2}{*}{$m$}& 
                \multicolumn{3}{c}{$m = n$} & \multirow{2}{*}{$n$}& 
                \multicolumn{3}{c}{$random$} & \multicolumn{3}{c}{$greedy$} & 
                \multicolumn{4}{c}{$subset$} \\           
                     &$\min$  & med & $\max$ & 
                     &$\min$  & med & $\max$  
                     &$\min$  & med & $\max$   
		     &$\min$  & med & $\max$ & $(succ)$\\\midrule 
		20   &0.4089& 0.4923& 0.5606&
	40 & 0.2063& 0.2423& 0.2802& 0.1952& 0.2272& 0.2603& 0.1693& 0.2017& 0.2603& (9) \\ 
  &&&&  60 & 0.1955& 0.2109& 0.2442& 0.1657& 0.2112& 0.2487& 0.1363& 0.1452& 0.1598& (7)\\ 
  &&&&  80 & 0.1977& 0.2126& 0.2219& 0.1686& 0.2101& 0.2975& *0.1338& 0.1406& 0.1813& (1)\\
  &&&& 100 & 0.1999& 0.2091& 0.2272& 0.1755& 0.1966& 0.2398& *\bf 0.1292& 0.1415& 0.1470& (0)\\ \midrule
	40 & 0.3375 & 0.3889& 0.4524&
        60 & 0.2100& 0.2659& 0.2921& 0.1850& 0.2358& 0.2578& *0.1499& 0.2358& 0.2578& (0)\\ 
  &&&&  80 & 0.2136& 0.2219& 0.2645& 0.1279& 0.1558& 0.2601& *0.1168& 0.1369& 0.1748& (0)\\
  &&&& 100 & 0.1898& 0.2209& 0.2504& 0.1419& 0.1712& 0.2150& *\bf 0.1127& 0.1404& 0.1691& (0)\\\midrule
	60 & 0.3378& 0.3992& 0.4491& 
	80 & 0.2354& 0.2533& 0.3181& 0.1855& 0.2367& 0.3181& *0.1855& 0.2367& 0.3181& (0)\\ 
  &&&& 100 & 0.2324& 0.2599& 0.2953& 0.1234& 0.1660& 0.2625& *\bf 0.1187& 0.1614& 0.2161& (0)\\ \midrule
	80 & 0.3190& 0.3510& 0.4197&
	100 & 0.2279& 0.2795& 0.3176& \bf 0.1872& 0.2410& 0.2801& *\bf 0.1872& 0.2410& 0.2801& (0)\\\bottomrule

 \end{tabular}} \caption{
$3d$, {\tt unif}: 
 Minimum, median, and maximum of the best star discrepancy values found for ten independently sampled {\tt unif} point sets 
 per each combination of $m$ and $n$ in $d=2$. 
 We show values returned by $random$, by $greedy$, and by the exact strategies (column $subset$). 
 Where none of MILP or BB converged within the given time frame of 1800 seconds, 
 the best found upper bound is shown (marked by a *); the number in parenthesis counts the point sets for which the optimal value could be computed by at least one of the two exact solvers. 
 The best value found for a given instance is the minimum
 obtained by all exact approaches for any of the ten point sets. The minimum value for
 each $m$ is in \textbf{boldface}.
 }
\label{tab:3dsummary_unif}
\end{table}

\end{landscape}

\end{document}